\pgfplotsset{width=7cm,compat=1.9}
\let\epsilon=\varepsilon %
\newcommand{\dist}{{\sf dist}}
\newcommand{\R}{\mathbb{R}}
\newcommand{\B}{\mathcal{B}}
\newcommand{\sol}{\mathsf{sol}}
\newcommand{\np}{\textsf{NP}}
\newcommand{\apx}{\textsf{APX}}
\newcommand\norm[1]{\lVert#1\rVert}
\newtheorem{theorem}{Theorem} 
\newtheorem{lemma}[theorem]{Lemma} %
\newtheorem{proposition}[theorem]{Proposition} %
\newtheorem{definition}[theorem]{Definition} %
\newtheorem{claim}[theorem]{Claim} %
\begin{document}

\title{Parameterized Approximation Algorithms for $k$-Center Clustering and Variants\footnote{A preliminary version appears in AAAI 2022.}}

\author[]{Sayan Bandyapadhyay\thanks{Supported by the European Research Council (ERC) via grant LOPPRE, reference 819416.} \textsuperscript{\rm 1}}

\author[]{Zachary Friggstad\thanks{Supported by an NSERC Discovery Grant and NSERC Discovery Accelerator Supplement Award.} \textsuperscript{\rm 2}}
\author[]{Ramin Mousavi\textsuperscript{\rm 2}}

\affil[]{\textsuperscript{\rm 1} Department of Informatics, University of Bergen, Norway\\
     \textsuperscript{\rm 2} Department of Computing Science, University of Alberta, Canada
\authorcr
  \{\tt sayan.bandyapadhyay@uib.no, zacharyf@ualberta.ca, mousavih@ualberta.ca\}}

\date{}

\maketitle

\begin{abstract}
$k$-center is one of the most popular clustering models. While it admits a simple 2-approximation in polynomial time in general metrics, the Euclidean version is NP-hard to approximate within a factor of 1.93, even in the plane, if one insists the dependence on $k$ in the running time be polynomial. Without this restriction, a classic algorithm yields a $2^{O((k\log k)/{\epsilon})}dn$-time $(1+\epsilon)$-approximation for Euclidean $k$-center, where $d$ is the dimension.

We give a faster algorithm for small dimensions: roughly speaking an $O^*(2^{O((1/\epsilon)^{O(d)} \cdot k^{1-1/d} \cdot \log k)})$-time $(1+\epsilon)$-approximation. In particular, the running time is roughly $O^*(2^{O((1/\epsilon)^{O(1)}\sqrt{k}\log k)})$ in the plane. We complement our algorithmic result with a matching hardness lower bound. 

We also consider a well-studied generalization of $k$-center, called Non-uniform $k$-center (NUkC), where we allow different radii clusters. NUkC is NP-hard to approximate within any factor, even in the Euclidean case. We design a $2^{O(k\log k)}n^2$ time  $3$-approximation for NUkC in general metrics, and a $2^{O((k\log k)/\epsilon)}dn$ time $(1+\epsilon)$-approximation for Euclidean NUkC. The latter time bound matches the bound for $k$-center.  
\end{abstract}

\section{Introduction}
Given $n$ data points in a metric space, a \emph{clustering} of the data is a partition into a number of groups (or clusters), such that 
points in each group are more similar compared to points across multiple groups. The notion of similarity is captured by the distances between the 
points, which form a metric. The task of partitioning the data as above is referred to also as clustering, which has numerous applications in the areas of artificial 
intelligence (AI), machine learning (ML) and data mining. For example, during the training process of an AI/ML model, the training samples are clustered into similar groups 
to enhance the expressive power of learning methods. With the goal of retrieving the natural clustering of the points, the task of partitioning is modeled as an optimization problem where 
the goal is to find a clustering that optimizes some objective function. For center-based objectives, each cluster is represented by a point, which is called the center 
of the cluster. The similarity (or dissimilarity) of a cluster is measured in terms of the deviation of the points in the cluster from the center. Consequently, this deviation is captured by 
the corresponding objective function, which one needs to minimize. Arguably the most popular center-based objectives are $k$-center, $k$-means, and $k$-median. All of these problems ask to find $k$ center points, 
where $k> 0$ is a given integer. The desired $k$ clusters are formed by assigning each point to its closest center. For $k$-center, the deviation is defined as the maximum distance between a point and its closest center, and for $k$-median, it is the sum of the distances between points and their closest centers. $k$-means is similar to $k$-median except here the deviation is the sum of the square of the distances. 

In this article, we limit our discussions to the Euclidean version of the above clustering problems where the data points and centers belong to a real  
space of dimension $d\ge 2$ and the distance measure is the Euclidean distance. All of these problems are \np-hard even in the plane, i.e., when $d=2$ \cite{DBLP:journals/tcs/MahajanNV12,megiddo1984complexity,feder1988optimal}. For both $k$-median and $k$-means, $(1+\epsilon)$-approximation algorithms (or \emph{approximation schemes}) are known with running time $2^{(f(\epsilon))^{d-1}}n{\log}^{d+6}n$ \cite{kolliopoulos2007nearly} and  $nk{(\log n)^{(d/\epsilon)}}^{O(d)}$ \cite{DBLP:conf/soda/Cohen-Addad18}, respectively, for some function $f$. Note that for constant dimension and in particular, for $d=2$, these algorithms run in polynomial time. In contrast, it is widely known that such an approximation scheme does not exist for $k$-center \cite{mentzer1988approximability,feder1988optimal,chen2021mentzer}. Indeed, it is \np-hard to approximate $k$-center in the plane up to a factor of 1.82 \cite{mentzer1988approximability,chen2021mentzer}. On the positive side, several polynomial-time 2-approximations are known for $k$-center \cite{DBLP:journals/tcs/Gonzalez85,feder1988optimal}. On the hardness side of $k$-median and $k$-means, it is known that polynomial-time approximation scheme is not possible for these problems (or are \apx-hard) when the dimension is not necessarily a constant \cite{awasthi2015hardness,bhattacharya2020hardness}. 

From the above discussion, it is evident that all these problems are intractable when the dimension is arbitrary. To cope with this hardness, researchers have designed \emph{fixed-parameter tractable} (FPT) $(1+\epsilon)$-approximations. Fixed-parameter tractability is a notion in \emph{parameterized complexity}~\cite{cygan2015parameterized} which allows running time to be expressed as a function of a parameter $p$. In particular, an algorithm has FPT running time parameterized by $p$ if it runs in $f(p)\cdot n^{O(1)}$ time, where $f$ is a function that solely depends on $p$. For clustering problems, the most natural parameter is $k$, which is typically small in practice, and hence an FPT $(1+\epsilon)$-approximation algorithm leads to an efficient approximation scheme. Note, in particular, that now the running time does not depend exponentially on $d$. Consequently, Badoiu~et al.~\cite{DBLP:conf/stoc/BadoiuHI02} designed an approximation scheme for $k$-center that runs in time $2^{O((k\log k)/{\epsilon}^2)}dn$, and later Badoiu and Clarkson \cite{DBLP:conf/soda/BadoiuC03} slightly improved the time to $2^{O((k\log k)/{\epsilon})}dn$. For $k$-median and $k$-means, several FPT approximation schemes were designed in a series of work \cite{DBLP:conf/stoc/BadoiuHI02,de2003approximation} culminating in a $2^{(k/\epsilon)^{O(1)}}nd$ time approximation scheme due to Kumar~et al.~\cite{DBLP:journals/jacm/KumarSS10}. 

The study of sub-exponential algorithms is a popular research direction in parameterized complexity. Here the goal is to design a $2^{o(p)}n^{O(1)}$ time algorithm for restricted instances of a problem that admits an $f(p)\cdot n^{O(1)}$ time algorithm \cite{demaine2005subexponential,DornFLRS13}. A central theme of this subarea is to design $2^{O(\sqrt{p}\cdot \text{polylog}(p))}n^{O(1)}$ time algorithms for planar instances of the problems \cite{FominLKPS20,FominLMPPS16,Nederlof20a}, e.g., planar vertex cover \cite{cygan2015parameterized}. ($\text{polylog}(p)$ is a constant power of $\log p$.) This theme is informally known as the \say{square root phenomenon}. 

\renewcommand{\arraystretch}{1}
\begin{center}
\begin{table}[t]
\centering
\begin{tabular}{ |p{3cm}|p{3.1cm}|p{3cm}| }
 \hline
 \centering \textbf{Problem} & \textbf{Algorithms/Upper bounds} & \textbf{Hardness/Lower bounds} \\ 
 \hline
\centering \multirow{2}{*}{$k$-center}  & \centering $2^{O((k\log k)/{\epsilon})}dn$ & 1.82-factor in $n^{O(1)}$\\ 
& \centering $2^{O((1/\epsilon)^{O(d)}k^{1-1/d}\log k)}$ Thm. \ref{thm:kcentercont} & $O^*(2^{2^{\bar\epsilon\cdot\delta\cdot d}\cdot k^{1-\delta}})$ Thm. \ref{hardness: k-center}\\
\hline
\centering \multirow{2}{*}{$k$-median}  & \centering $2^{(f(\epsilon))^{d-1}}n{\log}^{d+6}n$ & \apx-hard \\
& \centering $2^{(k/\epsilon)^{O(1)}}nd$ & \\
\hline
\centering \multirow{2}{*}{$k$-means} & \centering $nk{(\log n)^{(d/\epsilon)}}^{O(d)}$  & \apx-hard\\
& \centering $2^{(k/\epsilon)^{O(1)}}nd$ &\\
\hline
\centering \multirow{2}{*}{NUkC} &  \centering 3-approx. in $2^{O(k\log k)}n^2$ (metric) Thm. \ref{thm:nukcgeneral} & $\gamma$-hard $\forall \gamma$ in $n^{O(1)}$\\
& \centering $2^{O((k\log k)/\epsilon)}dn$ Thm. \ref{thm:nukceuclid} & \\
\hline 
\end{tabular}
\caption{A summary of previous and our work. Our results are marked with theorem numbers.}
\label{table:1}
\end{table}
\end{center}

Motivated by the above research directions, we study sub-exponential algorithms for $k$-center when the dimension $d$ is small. In particular, we ask the following questions. 
\begin{tcolorbox}
Does the planar version of $k$-center ($d=2$) admit a $2^{O(\sqrt{k}\cdot \text{polylog}(k))}n^{O(1)}$ time approximation scheme? Does $k$-center admit a $2^{o({k})}n^{O(1)}$ time approximation scheme when $d$ is a constant?
\end{tcolorbox}

Note that for constant $d$, $k$-median and $k$-means already admit even better polynomial-time approximation schemes. Thus, the above questions are relevant only to $k$-center. Considering the above questions, we answer both of them in affirmative. 
\begin{theorem}[Informal]\label{thm:kcenterinformal}
For any $0 < \epsilon \leq 1$, there is an $O^*(2^{O((1/\epsilon)^{O(1)}\sqrt{k}\log k)})$-time $(1+\epsilon)$-approximation algorithm for 2-dimensional $k$-center. In general, there is an $O^*(2^{O((1/\epsilon)^{O(d)} \cdot k^{1-1/d} \cdot \log k)})$-time $(1+\epsilon)$-approximation algorithm for $k$-center in $\mathbb R^d$.
\end{theorem}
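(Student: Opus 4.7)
The plan is to combine three ingredients: a radius-guessing reduction, a net-based discretization of candidate centers, and a geometric separator theorem that drives a subexponential branching algorithm. First, I binary-search for the optimal radius $r^*$: since $r^*$ lies within a $(1+\epsilon)$-factor of some inter-point distance, $O((1/\epsilon)\log n)$ guesses suffice, and I fix one guess $r$. Next, I compute an Euclidean $k$-center coreset (for instance via a Gonzalez-style $\epsilon r$-net) to reduce the number of input points to $|P'| = (k/\epsilon)^{O(d)}$ without disturbing the optimum by more than a $(1+\epsilon)$ factor. Finally, inside each ball $B(p,r)$ with $p\in P'$ I place a uniform grid of spacing $\Theta(\epsilon r/\sqrt d)$; snapping any optimal center to its nearest grid vertex changes its covering radius by at most $\epsilon r$, so it suffices to search over the resulting set $\mathcal C$ of candidate centers, with $|\mathcal C| = (k/\epsilon)^{O(d)}$. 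The problem has now become combinatorial: pick $k$ elements of $\mathcal C$ whose $(1+O(\epsilon))r$-balls cover $P'$.

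The heart of the algorithm is a divide-and-conquer step driven by a geometric separator for balls in $\mathbb R^d$. Such a separator guarantees that any family of $k$ bounded-radius balls admits a sphere $H$ with (i) at most $\alpha k$ balls having centers on each side of $H$ for some absolute constant $\alpha<1$, and (ii) at most $O(k^{1-1/d})$ balls intersecting $H$. I apply this to the unknown optimal family and enumerate over: the separator $H$ (one of $n^{O(d)}$ combinatorial possibilities, absorbed into $O^*$), and the set $S$ of optimal balls crossing $H$ together with their snapped centers in $\mathcal C$. To cleanly split the residual instance I thicken $H$ into a slab of width $\Theta(\epsilon r)$, so that any coreset point lying within $\epsilon r$ of $H$ is either inside an $S$-ball or absorbable into the $(1+\epsilon)$-slack already present in the radius; this thickening enlarges the set of balls potentially intersected to at most $O((1/\epsilon)^{O(d)}\microspace k^{1-1/d})$, giving
\[
|\mathcal C|^{\,O((1/\epsilon)^{O(d)}\microspace k^{1-1/d})} \;=\; 2^{O\bigl((1/\epsilon)^{O(d)}\microspace k^{1-1/d}\log k\bigr)}
\]
guesses in total. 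For each guess, every remaining point of $P'$ lies on an unambiguous side of $H$, and I recurse on the two subinstances with at most $\alpha k$ centers each.

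Writing $f(k) = 2^{c(1/\epsilon)^{O(d)}\microspace k^{1-1/d}\log k}$, the recurrence $T(k) = 2\microspace T(\alpha k) + f(k)$ telescopes into $T(k) = O^*(f(k))$ because at depth $i$ the dominant exponent $(\alpha^i k)^{1-1/d}$ shrinks by a factor $\alpha^{1-1/d}<1$ per level, so the contributions form a geometric series dominated by the root; specializing to $d=2$ yields the planar bound $O^*(2^{O((1/\epsilon)^{O(1)}\sqrt k\log k)})$. The step I expect to be the main obstacle is the separator construction with the explicit $\epsilon$-thickening: one needs to verify that thickening does not spoil the balanced split guaranteed by (i), and that no coreset point strays into the slab without being absorbable into the $(1+\epsilon)$-slack. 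This is precisely where the $(1/\epsilon)^{O(d)}$ factor in the exponent arises, since a thicker slab can intersect correspondingly more balls; controlling this blow-up, and verifying that the resulting branching tree is consistent across levels of recursion, is the technical heart of the argument.
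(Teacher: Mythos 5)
Your proposal takes the same high-level template as the paper (separator + guess near the separator + recurse independently on the two halves) but a genuinely different instantiation of the separator, and this difference matters. The paper applies the Voronoi-separator result of Bhattiprolu and Har-Peled to the \emph{filtered input} $\mathcal C' \cup \mathcal F'$: it inserts $O(|\mathcal C' \cup \mathcal F'|^{1-1/d})$ auxiliary points whose Voronoi cells separate the remaining points into two balanced halves. Because this separator is a function of the (known) filtered input, it can be computed deterministically in step 2 with no enumeration at all; the only thing that must be guessed is which optimal centers lie within distance $2(1+\epsilon)$ of the separator points, and the $\epsilon$-separation of $\mathcal F'$ immediately caps that at $(1/\epsilon)^{O(d)}$ per separator point. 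You instead invoke a Miller--Teng--Thurston--Vavasis-style sphere separator on the \emph{unknown optimal ball family}, which forces you to enumerate the separator sphere itself as well as the crossing balls. That is a heavier machine: you must argue for a polynomial bound on the number of combinatorially distinct candidate spheres, and you must control how many balls a thickened slab can cross, which is the step you flag as the technical obstacle and is indeed where your exponent $(1/\epsilon)^{O(d)}$ has to be earned rather than assumed.

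Two places in your sketch need to be tightened before it would go through. First, the sphere-separator theorem you quote is false as stated: it requires bounded ply. For a general family of $k$ equal-radius balls (as an optimal $k$-center solution may well be), a single point can lie in all $k$ balls, and then every balanced separating sphere is crossed by $\Omega(k)$ balls. Your grid-snapping does implicitly repair this by making the optimal centers $\Omega(\epsilon r/\sqrt d)$-separated so the ply is $(1/\epsilon)^{O(d)}$, but you should state this explicitly and use the $\kappa$-ply version of the separator theorem, which gives $O(\kappa^{1/d} k^{1-1/d})$ crossings. Second, your slab argument should be reorganized so that ``crossing'' means crossing the slab (not the sphere $H$), and the separator theorem should be applied to the $(1+O(\epsilon))r$-inflated balls so that the $O((1/\epsilon)^{O(d)} k^{1-1/d})$ bound covers the inflated crossings directly rather than being patched on afterwards. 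With those repairs, and with the observation that any uncovered coreset point inside the slab must lie in a slab-crossing ball (so that after removing $\mathcal S$-covered points the residual instance splits cleanly by the side of $H$), your recursion $T(k) = 2T(\alpha k) + f(k)$ telescopes as you say and yields the claimed bound. Net comparison: your route works but pays extra enumeration and needs the ply and slab bookkeeping; the paper's Voronoi separator sidesteps both because the separator is intrinsic to the filtered input and the $\epsilon$-net already supplies the separation needed to bound the number of nearby centers.
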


In this informal statement, the $O^*$-notation suppresses polynomial factors and a leading factor of $d^d$ in. Note that for any constant $d$, this yields a sub-exponential $O^*(2^{o(k)})$ time approximation scheme. One should compare these results with the earlier polynomial-time 1.82-factor hardness of approximation for planar $k$-center, which motivated us to design FPT algorithms. Theorem \ref{thm:kcenterinformal} is an informal version of Theorem \ref{thm:kcentercont} which appears in Section \ref{sec:kcenter}.   

In Section \ref{sec:hardness}, we prove a new hardness bound for $k$-center based on Exponential Time Hypothesis (ETH) \cite{ImpagliazzoPZ01} that almost complements our running time bound in the $2$-dimensional case. 

\begin{restatable}{theorem}{hardnesscont}
\label{hardness: cont-k-center}
There exists a constant $\alpha>1$ such that there is no $\alpha$-approximation algorithm for $2$-dimensional $k$-center in time $2^{o({k^{1/4}})}\cdot n^{O(1)}$, unless ETH fails.
\end{restatable}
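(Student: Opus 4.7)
The plan is to exhibit a gap-preserving reduction from $3$-SAT to $2$-dimensional $k$-center whose output parameter $k$ grows roughly as the fourth power of the source instance size. Under the Exponential Time Hypothesis together with the Sparsification Lemma, $3$-SAT on $n$ variables and $O(n)$ clauses has no $2^{o(n)} \cdot n^{O(1)}$ algorithm. Consequently, if the reduction builds a 2D $k$-center instance with $k = \Theta(n^4)$ whose $\alpha$-approximation (for some universal constant $\alpha > 1$) decides satisfiability, then any algorithm $\alpha$-approximating 2D $k$-center in time $2^{o(k^{1/4})} \cdot n^{O(1)}$ would yield a $2^{o(n)}$ algorithm for $3$-SAT, a contradiction.

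I would construct the reduction in two stages. First, apply Lichtenstein-style crossover gadgets to a given $3$-SAT instance to obtain a planar $3$-SAT (or, equivalently, planar vertex cover on a bounded-degree planar graph) instance of size $N = O(n^2)$; under ETH this planar problem has no $2^{o(\sqrt{N})}$ algorithm, giving one square-root loss in the exponent. Second, take the geometric gap-hardness reduction underlying the known $1.82$-hardness of planar $k$-center (Feder--Greene, Mentzer, Chen et al.) and replace each vertex/edge/clause of the planar instance by a local point gadget in the plane so that YES instances are coverable by $k$ balls of radius $1$ while NO instances force radius at least $\alpha$ at some uncovered point. For the multiplicative gap to survive at the geometric level while the instance remains in the plane, the gadgets are arranged on a grid of side $\Theta(\sqrt{N})$, and each gadget is designed to contribute $\Theta(\sqrt{N})$ centers, so that the total parameter is $k = \Theta(N^2) = \Theta(n^4)$.

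The main obstacle I expect is quantitatively calibrating the geometric gadgets so that the final multiplicative gap stays bounded below by a universal constant while the total number of centers does not exceed $O(N^2)$. Existing $1.82$-hardness proofs target only NP-hardness and are loose about the blow-up, so one either has to re-analyze them with a finer accounting or design fresh gadgets tailored to this bound. A cleaner alternative route would be to reduce directly from planar bounded-degree vertex cover (which already carries the $2^{o(\sqrt{N})}$ ETH lower bound and is gap-hard via standard PCP constructions), and then build tightly-sized geometric gadgets; composing the planar square-root loss with the quadratic blow-up of the geometric embedding would then yield exactly the target $2^{o(k^{1/4})}$ bound. Either way, the arithmetic $k^{1/4} = (\Theta(n^4))^{1/4} = \Theta(n)$ makes the ETH contradiction go through.
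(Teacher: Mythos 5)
Your reduction chain matches the paper's: $3$-SAT $\to$ planar bounded-degree Vertex Cover (Lichtenstein-type crossover gadgets, so ETH rules out $2^{o(\sqrt{N})}$ algorithms for the planar problem on $N$ vertices) $\to$ planar $k$-center via the Feder--Greene geometric reduction, followed by the arithmetic $k=\Theta(N^2)=\Theta(n^4)$ and $k^{1/4}=\Theta(n)$. However, two details in your write-up are genuinely off.

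First, your quantitative justification of the quadratic blow-up does not check out. You claim the gadgets sit on a grid of side $\Theta(\sqrt N)$ with each gadget contributing $\Theta(\sqrt N)$ centers; for $N$ gadgets this gives $k=\Theta(N^{3/2})$, not $\Theta(N^2)$. The actual accounting in Feder--Greene is different: the planar graph is first drawn so that every edge becomes an (odd-length) axis-parallel path, and a planar graph on $N$ vertices admits such a drawing in an $O(N)\times O(N)$ grid (e.g.\ Shiloach). The reduction places essentially one point and one candidate center per unit of path, so $k\le k_1+L$ where $L=O(N^2)$ is the total path length. Your worry that the existing $1.82$-hardness proofs ``are loose about the blow-up'' is precisely what invoking a linear-area planar embedding resolves; no re-engineering of the gadgets is required.

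Second, you should not be looking for planar Vertex Cover to be ``gap-hard via standard PCP constructions.'' Planar Vertex Cover admits a PTAS (Baker), so no such gap-hardness exists, and none is needed. The multiplicative gap is introduced entirely by the geometric embedding: Feder--Greene map a yes-instance of \emph{exact} planar VC to a $k$-center instance with optimal radius $1$, and a no-instance to one with optimal radius at least $1.82$. Hence an $\alpha$-approximation with $\alpha<1.82$ decides the exact planar VC instance, and the ETH lower bound you compose with is the one for the exact decision problem from Proposition~\ref{prop:planarVC} — no PCP machinery enters.
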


Note that in the above hardness bound, the power of $k$ is only $1/4$, whereas in our upper bound it is $1/2$. 
However, in the discrete case of $k$-center (popularly known as $k$-supplier), where centers can only be chosen from a given finite set of points, we obtain a tight lower bound based on Randomized ETH (rETH) \cite{dell2014exponential}.
We note that the result of Theorem \ref{thm:kcenterinformal} also holds for $k$-supplier (Theorem \ref{thm:ksupplier}) which is described in Section \ref{sec:kcenter}. Our hardness result addresses the double-exponential dependence on $d$ in the running time by showing one cannot improve it substantially
if one hopes to have better-than-linear dependence on $k$ in the exponent.

\begin{restatable}{theorem}{hardness}
\label{hardness: k-center}
Under rETH, there are constants $\bar\epsilon>0$ and $\alpha>1$ such that there is no $\alpha$-approximation algorithm for $d$-dimensional Euclidean $k$-supplier in time $O^*(2^{2^{\bar\epsilon\cdot\delta\cdot d}\cdot k^{1-\delta}})$ for any $0< \delta< 1$.
\end{restatable}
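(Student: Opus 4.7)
The plan is to reduce a gap version of 3-SAT to $d$-dimensional Euclidean $k$-supplier, producing an instance whose parameter satisfies $k=\Theta(n/d)$ when $n$ is the number of variables in the 3-SAT instance. First I would invoke the PCP theorem (or Dinur-style gap amplification) on top of rETH to obtain a constant $\eta>0$ such that distinguishing satisfiable 3-CNF formulae on $n$ variables from those in which at most a $(1-\eta)$ fraction of the clauses can be simultaneously satisfied requires randomized time $2^{\Omega(n)}$; starting from a bounded-occurrence variant also keeps the clause count $m=O(n)$.

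Given such a gap formula $\varphi$ and the target dimension $d$, I fix a small constant $c<\bar\epsilon$ to be tuned at the end and partition the variables into $t:=\lceil n/(cd)\rceil$ blocks of size at most $cd$. Each block has at most $2^{cd}$ partial assignments, which I enumerate as ``candidate center'' locations. I place one ``hub'' point $h_i\in\mathbb{R}^d$ per block and put the $2^{cd}$ candidates for block $i$ at perturbed vertices of a tiny scaled hypercube around $h_i$, with the coordinates encoding the bits of the partial assignment; the hubs themselves are placed mutually far apart so that no single center can cover clients from two different blocks. A dedicated enforcement client at each $h_i$ together with the budget $k:=t$ forces exactly one candidate per block to be chosen. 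For each clause $C$ contained in a single block (after first applying a standard trick to replace every multi-block clause by $O(1)$ single-block clauses, inflating $n$ by at most a constant factor), I add a client $q_C$ whose location places it within threshold radius $r$ of exactly those candidates whose partial assignment satisfies $C$.

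The gap correctness is then: if $\varphi$ is satisfiable, the optimum radius is at most $r$ (pick the candidate representing the satisfying assignment restricted to each block); if at most a $(1-\eta)$ fraction of clauses are satisfiable, every selection of $t$ centers leaves an $\Omega(\eta)$ fraction of clause clients whose nearest chosen center is at distance at least $\alpha\cdot r$ for some absolute constant $\alpha>1$. This quantitative gap is the delicate part of the proof: the embedding must realize distance at most $r$ to ``satisfying'' candidates while guaranteeing distance at least $\alpha r$ to ``unsatisfying'' candidates uniformly over all clauses. I intend to achieve this by placing the $2^{cd}$ candidates at scaled Boolean-cube coordinates chosen so that flipping a single assignment bit changes the distance from a clause client by a fixed additive gap; a standard calibration of that additive gap against the base radius then yields a multiplicative ratio $\alpha>1$ independent of $n$ and $d$.

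The running-time argument closes the proof. Any $\alpha$-approximation algorithm running in time $O^*(2^{2^{\bar\epsilon\delta d}\cdot k^{1-\delta}})$, applied to the reduction's output with $k=t=\Theta(n/d)$, decides $(1,1-\eta)$-GAP-3-SAT in time $O^*(2^{2^{\bar\epsilon\delta d}\cdot (n/(cd))^{1-\delta}})$. For any fixed $d$ and $\delta\in(0,1)$, choosing $c$ sufficiently small yields $2^{\bar\epsilon\delta d}\cdot (cd)^{-(1-\delta)}\cdot n^{1-\delta}=o(n)$ as $n\to\infty$, so the total time is $2^{o(n)}$, contradicting rETH. I expect the main obstacle to be the geometric gadget of step two: embedding $2^{cd}$ candidates per block into $\mathbb{R}^d$ with distance thresholds that simultaneously enforce per-clause satisfiability and produce a strictly multiplicative gap that is uniform over all clauses and independent of $d$ and $n$.
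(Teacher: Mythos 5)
Your high-level plan (encode a SAT-type problem so that a too-fast $k$-supplier algorithm would break rETH, then optimize the tradeoff between $k$ and $d$) is sound, but the geometric gadget at the heart of your reduction has a fundamental flaw that cannot be repaired, and the gadget is precisely what distinguishes your route from the paper's.

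Consider one block: $2^{cd}$ candidate centers placed at (a scaled copy of) the vertices of a Boolean cube $\{0,s\}^{cd}$, embedded in $\mathbb R^d$ (so you need $cd\le d$, i.e.\ $c\le 1$). For a clause $C$ on three variables in the block, the satisfying candidates span the full cube in all $cd-3$ coordinates outside $C$, so the diameter of that set is $\Theta(s\sqrt{cd})$. A single client $q_C$ that must lie within radius $r$ of \emph{every} satisfying candidate therefore forces $r=\Omega(s\sqrt{cd})$. On the other hand, a satisfying candidate and a violating candidate can be adjacent on the cube, so the additive gap between ``close'' and ``far'' distances is at most $s$. This caps the achievable multiplicative gap at $\alpha\le 1+O(s/r)=1+O(1/\sqrt{cd})$, which degenerates to $1$ as $d\to\infty$. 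Hence $\alpha$ cannot be a constant independent of $d$, which is exactly what the theorem requires. Your own note that this is ``the main obstacle'' is on target; the obstacle is not a technicality but rules out the construction. (You also do not need the PCP/Dinur machinery: $k$-supplier cost is a max, not an average, so a single violated clause already forces a bad radius on a NO instance.)

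The paper takes a very different and much more elementary route. It reduces from Vertex Cover by placing facilities at the standard basis vectors $e_i\in\mathbb R^n$ and clients at $e_i+e_j$ for each edge $(i,j)$; these all-pairwise distances are fixed constants ($1$ vs.\ $\sqrt3$), so the gap is a universal constant from the start. It then applies Johnson--Lindenstrauss to drop the dimension to $d=c_{JL}\log n$ while preserving the gap up to $1.1/0.9$, and chooses $\bar\epsilon=1/(2c_{JL})$ so that $2^{2^{\bar\epsilon\delta d}\cdot k^{1-\delta}}\le 2^{n^{\delta/2}\cdot n^{1-\delta}}=2^{n^{1-\delta/2}}=2^{o(n)}$, contradicting rETH hardness of Vertex Cover. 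The key structural difference is that the paper's hard instances have $d=\Theta(\log n)$, which is precisely how the double exponential in $d$ gets ``spent,'' whereas you were aiming for hardness at constant $d$ (a strictly stronger claim) and ran into the packing obstruction above.
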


Next, we turn our attention to another clustering problem called Non-uniform $k$-center. It is a generalization of $k$-center where it is possible to select clusters of different radii. 

\begin{definition}[Non-Uniform $k$-center (NUkC)] 
Given two sets of points $\mathcal{C}$ and $\mathcal{F}$ in a metric space $(X,\dist)$, an integer $k > 0$, $t \le k$ distinct integers  (radii) $r_1 > r_2 > \ldots > r_t > 0$ and non-negative integers $k_1,\ldots,k_t$ such that $\sum_{i=1}^t k_i=k$, the goal is to find a number (dilation) $\alpha$ and to choose $k_i$ balls centered at the points of $\mathcal{F}$ with radius $\alpha\cdot r_i$ for all $1\le i\le t$, such that the union of the chosen balls contains all the points of $\mathcal{C}$ and $\alpha$ is minimized.  
\end{definition}
 
Note that in the special case when $t=1$, the problem is basically $k$-center. NUkC was formulated by Chakrabarty~et al.~\cite{chakrabarty2020non} who also described its applications in fine-tuned clustering and vehicle routing. They showed that for any $\gamma \ge 1$, the problem in general metrics is \np-hard to approximate within a factor of $\gamma$. The same hardness holds even in the discrete (and hence in the continuous) Euclidean case \cite{bandyapadhyay2020perturbation}. Consequently, we ask the following questions. 

\begin{tcolorbox}
Does NUkC in general metrics admit a constant-factor approximation in FPT ($f(k)\cdot n^{O(1)}$) time? Does Euclidean NUkC admit an approximation scheme in FPT ($f(k)\cdot n^{O(1)}$) time?
\end{tcolorbox}

In this work, we answer both of the questions in affirmative. 

\begin{theorem}[Informal]\label{thm:nukcintro}
A $3$-approximation for NUkC can be computed in time $2^{O(k\log k)}n^2$. Moreover, if $\mathcal{C}=\mathcal{F}$, the approximation factor is 2. For Euclidean NUkC, a $(1+\epsilon)$-approximation can be computed in time $2^{O((k\log k)/\epsilon)}dn$. 
\end{theorem}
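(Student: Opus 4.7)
For both halves of the theorem, the plan is to first guess the optimal dilation $\alpha^*$ and then recursively branch over clusters. Since $\alpha^*$ is realised by some client--facility pair at distance $\alpha^* r_i$, it lies in the set $\{\dist(c,f)/r_i:c\in\mathcal{C},\,f\in\mathcal{F},\,1\le i\le t\}$ of $O(kn^2)$ candidates, which can be enumerated. Having fixed a correct guess $\alpha^*$, maintain the set $U\subseteq\mathcal{C}$ of uncovered clients and the residual budgets $k_1,\ldots,k_t$; while $U\ne\emptyset$, pick an arbitrary $p\in U$ and branch on the $t\le k$ possible radii that $p$'s cluster might use in the optimum. Each internal node spends one budget unit, so the recursion has depth at most $k$ and $k^k=2^{O(k\log k)}$ leaves.

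\textbf{General-metric $3$-approximation.} Inside the branch guessing that $p$'s cluster has radius $r_i$, find any facility $f\in\mathcal{F}$ with $\dist(p,f)\le\alpha^* r_i$; such an $f$ exists when the guess is correct, since the true centre $c^*$ already has this property. For every other cluster point $q$,
\[\dist(q,f)\le\dist(q,c^*)+\dist(c^*,p)+\dist(p,f)\le 3\alpha^* r_i,\]
so opening $\B(f,3\alpha^* r_i)$ covers the whole cluster; decrement $k_i$, delete covered clients from $U$, and recurse. Multiplying $2^{O(k\log k)}$ leaves by $O(n^2)$ pre- and per-leaf work yields the claimed $2^{O(k\log k)}n^2$ running time. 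In the special case $\mathcal{C}=\mathcal{F}$, open $\B(p,2\alpha^* r_i)$ directly at $p$; this saves the $\dist(p,f)$ triangle-inequality hop and recovers the improved factor of $2$.

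\textbf{Euclidean $(1+\epsilon)$-approximation.} In Euclidean space, refine the centre via a Badoiu--Clarkson-style iteration. Inside a branch that has guessed radius $r_i$, initialise a candidate $c^{(0)}=p$ and for $j=0,1,\ldots,O(1/\epsilon)$ update $c^{(j+1)}$ as a convex combination of $c^{(j)}$ and the cluster-member farthest from $c^{(j)}$. Since cluster membership is not known a priori, each refinement round is itself a branching step: it branches over which of the $k$ clusters (and, where needed, which radius) the currently-farthest uncovered point belongs to. A Badoiu--Clarkson-type analysis guarantees that after $O(1/\epsilon)$ rounds $c^{(j)}$ lies within $\epsilon\cdot\alpha^* r_i$ of the true centre, so $\B(c^{(j)},(1+\epsilon)\alpha^* r_i)$ covers the cluster. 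Across all $k$ clusters the refinement tree has depth $O(k/\epsilon)$ and fan-out $k$, giving $k^{O(k/\epsilon)}=2^{O((k\log k)/\epsilon)}$ leaves and $O(dn)$ per-node work for the farthest-point computation, matching the target $2^{O((k\log k)/\epsilon)}dn$ time.

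\textbf{Main obstacle.} I expect the most delicate step to be verifying the Badoiu--Clarkson convergence in the non-uniform setting, where the \say{target radius} varies cluster by cluster and only the radius guess along the current branch is trustworthy; the bookkeeping must ensure per-cluster inheritance of the $\epsilon$-convergence guarantee and the pruning of branches whose radius profile is infeasible. A secondary technical point is replacing the exact enumeration of $\alpha^*$ by a $(1+O(\epsilon))$-accurate search, so that the final approximation factor remains $1+\epsilon$ while keeping the $n$- and $d$-dependence linear rather than quadratic.
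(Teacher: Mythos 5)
Your proposal follows essentially the same route as the paper: guess the dilation, then branch over the at most $k$ possible optimal radii for each newly selected uncovered point, with a triangle-inequality argument yielding factor $3$ (or $2$ when $\mathcal C=\mathcal F$, since the chosen point can itself serve as the center) in general metrics, and a Badoiu--Clarkson/Badoiu--Har-Peled--Indyk core-set scheme with per-round cluster guessing in the Euclidean case. One inaccuracy worth flagging in your Euclidean sketch: the guarantee you get from a BC-style iteration is \emph{not} that the candidate center converges to within $\epsilon\cdot\alpha^* r_i$ of the true optimal center $c_i^*$ -- this need not hold if the points sampled into cluster $i$ all sit in one cap of $B_i^*$. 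What the paper actually maintains is a core-set $S_i\subseteq\mathcal C\cap B_i^*$ and uses the MEB center $c_{B(S_i)}$ as the candidate; Lemma~\ref{lem:boundoncoreset} shows, via a recurrence on the ratio $r_{B(S_i)}/(1+\epsilon)r_i^*$ driven by the farthest-uncovered-point rule, that after $\lceil 2/\epsilon\rceil$ additions to $S_i$ the ball $B(c_{B(S_i)},(1+\epsilon)r_i^*)$ covers $\mathcal C\cap B_i^*$, without ever asserting proximity to $c_i^*$. Your final covering conclusion and the resulting $2^{O((k\log k)/\epsilon)}dn$ bound are correct; only the intermediate justification via proximity to the true center is the wrong lens and should be replaced by the saturating-MEB-radius argument.
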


Note that the result for Euclidean NUkC is a strict generalization of the result for Euclidean $k$-center due to Badoiu and Clarkson \cite{DBLP:conf/soda/BadoiuC03}. Indeed, our algorithm is motivated by their algorithm. Theorem \ref{thm:nukcintro} is an informal version of Theorem \ref{thm:nukcgeneral} and \ref{thm:nukceuclid} which appear in Section \ref{sec:nukc}.  See Table \ref{table:1} for a summary of our results. 

\section{$k$-center}\label{sec:kcenter}
We begin by presenting our faster approximation scheme for $k$-center.
In fact, we give a faster approximation scheme for the closely-related $k$-supplier
problem, and then discuss how this can be used to give a faster approximation
for $k$-center. 

\paragraph{$k$-supplier.} We are given points $\mathcal{C}$ to be clustered
and given candidate centers $\mathcal{F}$. The goal is to find $F \subseteq \mathcal{F}$
with $|F| = k$ minimizing ${\rm cost}(F) := \max_{p \in \mathcal{C}} d(p,F)$.

Throughout this section, $O()$ will suppress only absolute constants that are
independent of $d$ and $\epsilon$.

We first prove the following.
\begin{theorem}\label{thm:ksupplier}
  For any $d \geq 1$ and $1 \geq \epsilon > 0$, there is a $(1+\epsilon)$-approximation
  for instances of $k$-supplier in $\mathbb R^d$ with running time
  $O(|\mathcal C| \cdot |\mathcal F| +  2^{O((1/\epsilon)^{O(d)} \cdot k^{1-1/d} \cdot \log k + d \log d)})$.
\end{theorem}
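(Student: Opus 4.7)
The plan is to combine a grid-based compression of the candidate set with a recursive geometric separator argument.

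First, I would compute a constant-factor approximation $R$ of the optimum radius $r^*$ in $O(|\mathcal C|\cdot|\mathcal F|)$ time using a standard threshold-style algorithm (e.g., Hochbaum--Shmoys for $k$-supplier), contributing the leading additive term of the running time and giving $r^* \le R \le 3 r^*$. I would then impose an axis-aligned grid of side length $\Theta(\epsilon r^*/\sqrt d)$ on $\mathbb{R}^d$ and replace $\mathcal F$ by a single representative per non-empty cell; a routine charging argument shows that this snapping changes the achievable cost by at most $\epsilon r^*$, so finding a solution of cost $(1+\epsilon) r^*$ on the compressed set $\mathcal F'$ suffices. Since each optimal cluster is contained in a ball of radius $2r^*$, it meets only $(1/\epsilon)^{O(d)}\cdot d^{O(d)}$ grid cells, and therefore $|\mathcal F'| = 2^{O(d \log d)} \cdot (1/\epsilon)^{O(d)} \cdot k$ without loss of generality.

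The heart of the argument is a geometric separator theorem in $\mathbb{R}^d$: any set of $k$ points admits an axis-parallel separator hyperplane, from a polynomial-sized family of candidates, whose $r^*$-neighborhood meets only $O(k^{1-1/d})$ of them and that splits the remaining points into two pieces of size at most $\alpha k$ for some fixed $\alpha < 1$. I would apply this to the (unknown) set of cells in $\mathcal F'$ that contain an optimal center. For each candidate hyperplane I would branch on (i) the subset of $O(k^{1-1/d})$ crossed cells, (ii) the specific representative chosen in each such cell, and (iii) the split of the remaining budget across the two halves. This yields a per-level branching factor of $2^{O((1/\epsilon)^{O(d)} \cdot k^{1-1/d}\log k + d\log d)}$. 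Recursing on the two subinstances of size at most $\alpha k$, the geometric sum $\sum_{i \ge 0} 2^i (k/2^i)^{1-1/d} = O(k^{1-1/d})$ in the exponent collapses the total enumeration to the claimed bound $2^{O((1/\epsilon)^{O(d)} \cdot k^{1-1/d}\log k + d\log d)}$.

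The main obstacle will be making the separator step compatible with the fact that the optimal center set is not known in advance. I would resolve this by applying the separator to \emph{grid cells} rather than to clusters: the optimal solution occupies at most $k$ cells of $\mathcal F'$, and the separator yields a natural decomposition of this set once one guesses the crossed cells. A point of $\mathcal C$ lying near the hyperplane may be served either by a non-crossed center on its own side or by one of the explicitly-guessed crossed centers, so after the guess the two subproblems are independent $k'$-supplier instances with $k' \le \alpha k$ and a restricted point set; feasibility at each leaf is checked in polynomial time. A secondary subtlety is that the $d^{O(d)}$ overhead arising from volume-to-cell conversion must be absorbed into the additive $d\log d$ term in the exponent rather than left as a multiplicative prefactor, which can be done by folding it into the enumeration of representatives inside each crossed cell.
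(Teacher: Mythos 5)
Your skeleton---compress the candidate set, find a sparse balanced separator among the (unknown) optimal centers, guess and remove the centers near it, and recurse on the two halves---matches the paper's. The binary search, the grid/$\epsilon$-net compression to get $|\mathcal F'| \le k\,(1/\epsilon)^{O(d)}$, and the budget-splitting recursion are all essentially the paper's. The difference, and the gap, is the separator lemma. The paper uses the Voronoi separator theorem of Bhattiprolu and Har-Peled: for any $n$ points in $\mathbb R^d$ one can \emph{compute}, in expected $O(n)$ time, a set $\mathcal Z$ of $O(n^{1-1/d})$ auxiliary points and a partition of the original $n$ points into two parts each of size at most $(1-1/c_d)n$ with $c_d=\lceil 2\sqrt d\rceil^d+1$, so that in the Voronoi diagram of all $n+|\mathcal Z|$ points the two parts' Voronoi cells never touch. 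The algorithm computes this separator once for the known set $\mathcal C'\cup\mathcal F'$---no candidate family is enumerated---and then guesses which optimal centers lie within distance $2(1+\epsilon)$ of some $q\in\mathcal Z$; a short Voronoi-cell argument (the one you gesture at with ``served by its own side or a guessed center'') then shows the two subproblems are independent. Because the separator points are \emph{inserted} rather than chosen from a fixed family, the bound $O(n^{1-1/d})$ holds for every configuration, including degenerate ones.

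You instead assert that any $k$ points admit an axis-parallel hyperplane, from a polynomial candidate family, whose $r^*$-neighborhood hits only $O(k^{1-1/d})$ of them and that splits the rest $\alpha$-balanced. You neither prove nor cite this, and as stated it is not correct: a slab of width $2r^*$ is $\Theta(\sqrt d/\epsilon)$ cells thick, so a single column of collinear optimal cells already forces $\Theta(\sqrt d/\epsilon)$ guessed cells, which exceeds $k^{1-1/d}$ for small $k$ (the paper's brute-force fallback for small $k$ is exactly for this reason, and your proposal omits it). More importantly, obtaining a \emph{balanced} axis-parallel separator with boundary $O(k^{1-1/d})$ is a nontrivial geometric theorem: the natural ``sweep the max-span axis and average'' argument fails when the middle quantiles are compressed into a thin slab, and the standard sphere-separator theorems (Miller--Teng--Thurston--Vavasis, Smith--Wormald) give the right order of magnitude but produce sphere rather than axis-parallel separators. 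You would need to prove or cite a precise statement, adapt it to give a boundary bound in terms of the $r^*$ slab width relative to the cell size, and verify that the candidate family is polynomial. Finally, your displayed sum is wrong: $\sum_{i\ge 0} 2^i(k/2^i)^{1-1/d}=k^{1-1/d}\sum_i 2^{i/d}=\Theta(k)$ once $i$ reaches $\log k$, not $O(k^{1-1/d})$. The quantity controlling the exponent is the path-sum of boundary sizes $\sum_i((1-1/c_d)^i k)^{1-1/d}=O(c_d\,k^{1-1/d})$, since the leaf count is the \emph{product} of per-level branching factors $|\mathcal F'|^{O(\text{boundary at that level})}$; your intended conclusion follows from that sum, but the formula you wrote does not give it.
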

Note, if one simply regards $d$ and $\epsilon$ as constants then the exponential term
in the running time simplifies to $2^{O(k^{1-1/d} \cdot \log k)}$. For the special
case of $\mathbb R^2$, the exponential part of the running time is $2^{O(\sqrt{k} \cdot \log k)}$.
The doubly-exponential dependence on $d$ is clearly not desirable, but we
provide evidence in the next section suggesting this may not be possible to improve.

The main idea behind our approach uses a somewhat-recent result by Bhattiprolu and Har-Peled \cite{bhattiprolu2016voronoi} on Voronoi separators. Intuitively, they show that
for any $n$ points in $\mathbb R^d$, one can insert an additional $O(n^{1-1/d})$ points
such that in the Voronoi diagram of all points (original and inserted), the two sets of original points
can be partitioned into two roughly-equal halves and these halves are separated by the Voronoi cells
of the new points.

We utilize such a Voronoi separator to help guess $O(k^{1-1/d} \cdot (1/\epsilon)^{O(d)})$ centers
of the optimum solution such that if we serve all points near these centers, the remaining problem
naturally decomposes into two roughly equal-size halves that can be treated independently (thus, bounding
the depth of recursion to be logarithmic).
Naturally, to ensure we are guessing these centers from a set of size $O(k \cdot (1/\epsilon)^{O(d)})$ rather than from a possibly-larger
set $O(|\mathcal F|)$ we also have to filter the input so points are not ``close'' to each other.

\paragraph{\bf Preliminary Step: Reducing to a feasibility check}
In $O(|\mathcal C| \cdot |\mathcal F|)$ time, we can find a solution $F \subseteq \mathcal{F}$
with ${\rm cost}(F) \leq 3 \cdot OPT$ \cite{hochbaum1986unified}. We then know $OPT \leq {\rm cost}(F) \leq 3 \cdot OPT$. In the remaining steps, we will describe a binary search algorithm that
for some value $R > 0$, will either find a $k$-supplier solution with value $(1+\epsilon) \cdot R$ or will (correctly) declare there is no solution with value $\leq R$. We then use a binary
search to find a value $R$ in the range $[{\rm cost}(F)/3, {\rm cost}(F)]$ such that using $R$
produces an infeasible solution whereas using $(1+\epsilon) \cdot R$ will return a solution with cost $(1+\epsilon)\cdot(1+\epsilon) \cdot R \leq (1+3\epsilon) \cdot R$. Scaling $\epsilon$ by a constant factor gives the desired result. The number of iterations of the binary search will be at most $\log_2 \frac{3}{\epsilon}$.

By scaling the point set, it suffices to give such an algorithm for $R = 1$. That is, from this
point forward we present an algorithm that will either find a solution with
cost at most $1+\epsilon$ or correctly declare there is no solution with value $\leq 1$.

\paragraph{Step 1: Reducing the input size}
We perform standard filtering. Initially let $\mathcal C' = \emptyset$ and $\mathcal F' = \emptyset$.
Then we process $p \in \mathcal C$ one at a time: if $d(p, \mathcal C') > \epsilon$
then we add $p$ to $\mathcal C'$. Also process $i \in \mathcal F$ one at a time: if $d(i, \mathcal F') > \epsilon$ yet $d(i,\mathcal C') \leq 1+\epsilon$, then add $i$ to $\mathcal F'$.

\begin{lemma}
  If there is a solution with cost at most 1, we have $|\mathcal C'|, |\mathcal F'| \leq k \cdot (1/\epsilon)^{O(d)}$ and the $k$-supplier instance $(\mathcal C', \mathcal F', k)$ has optimum value at most $1+2\cdot \epsilon$. Conversely, given a solution $F' \subseteq \mathcal F'$ with cost at most $\alpha$ in the new instance, its cost in the original instance is at most $\alpha + \epsilon$.
\end{lemma}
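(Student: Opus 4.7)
The plan is to verify the three claims separately, combining a standard Euclidean packing bound with careful bookkeeping about the filtering.

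For the cardinality bounds, I would first note that by construction both $\mathcal{C}'$ and $\mathcal{F}'$ are $\epsilon$-separated, i.e., any two of their points lie at distance strictly greater than $\epsilon$. A ball of radius $r$ in $\mathbb{R}^d$ contains at most $O((r/\epsilon)^d)$ such $\epsilon$-separated points, via the standard volume argument of placing disjoint $(\epsilon/2)$-balls around each point and comparing to a ball of radius $r+\epsilon/2$. Given any solution $F^* \subseteq \mathcal{F}$ of cost at most $1$, the set $\mathcal{C}' \subseteq \mathcal{C}$ is covered by the $k$ radius-$1$ balls centred at $F^*$, yielding $|\mathcal{C}'| \le k \cdot (1/\epsilon)^{O(d)}$. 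The $\mathcal{F}'$-filtering rule forces every point of $\mathcal{F}'$ to lie within distance $1+\epsilon$ of $\mathcal{C}'$, so $\mathcal{F}'$ is contained in the union of $|\mathcal{C}'|$ balls of radius $1+\epsilon$, each holding at most $O((1/\epsilon)^d)$ points of $\mathcal{F}'$; hence $|\mathcal{F}'| \le k \cdot (1/\epsilon)^{O(d)}$.

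For the optimum-value bound on the filtered instance, I would build a solution $F' \subseteq \mathcal{F}'$ of size $k$ from $F^*$ via a map $\phi : F^* \to \mathcal{F}'$. Fix $i^* \in F^*$. Since $i^*$ serves some $p \in \mathcal{C}$ with $d(p,i^*) \le 1$, the $\mathcal{C}'$-filtering produces a $p' \in \mathcal{C}'$ with $d(p,p') \le \epsilon$, so $d(i^*, \mathcal{C}') \le 1+\epsilon$. Thus when $i^*$ was processed by the $\mathcal{F}'$-filter its $\mathcal{C}'$-proximity test succeeded, and either $i^*$ itself was added to $\mathcal{F}'$, or some $i' \in \mathcal{F}'$ already satisfied $d(i^*, i') \le \epsilon$; define $\phi(i^*)$ to be this witness so that $d(i^*, \phi(i^*)) \le \epsilon$ in either case. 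Setting $F' = \{\phi(i^*) : i^* \in F^*\}$ (padded arbitrarily to exactly $k$ elements), for any $q \in \mathcal{C}'$ served in the original instance by $i^* \in F^*$, the triangle inequality gives $d(q, \phi(i^*)) \le d(q, i^*) + d(i^*, \phi(i^*)) \le 1 + \epsilon \le 1 + 2\epsilon$, so $F'$ has cost at most $1+2\epsilon$ in $(\mathcal{C}', \mathcal{F}', k)$.

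The converse is immediate: given $F' \subseteq \mathcal{F}'$ with cost at most $\alpha$ on $\mathcal{C}'$, for any $p \in \mathcal{C}$ pick $p' \in \mathcal{C}'$ with $d(p,p') \le \epsilon$ (supplied by the $\mathcal{C}'$-filtering rule) and then $d(p, F') \le d(p,p') + d(p', F') \le \epsilon + \alpha$.

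The main obstacle, if any, is the interaction between the two filtering phases when establishing the map $\phi$: one must first invoke the completed $\mathcal{C}'$ to upper-bound $d(i^*, \mathcal{C}')$ by $1+\epsilon$, and only then reason about the incremental state of $\mathcal{F}'$ at the moment $i^*$ is processed. This is precisely why the construction runs $\mathcal{C}'$ to completion before beginning $\mathcal{F}'$, and it is the step that one should double-check when writing the full proof.
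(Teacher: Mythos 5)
Your proof is correct and follows essentially the same approach as the paper: the same $\epsilon$-separation packing argument for the cardinality bounds, the same center-replacement construction (your map $\phi$) using the $\mathcal F'$-filter's proximity test to find a nearby surviving facility, and the identical triangle-inequality argument for the converse. Your explicit remark that $\mathcal C'$-filtering must run to completion before $\mathcal F'$-filtering begins correctly pins down a subtlety the paper leaves implicit.
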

\begin{proof}
  Suppose there is a solution with cost at most 1 in the original instance. Let $F^* \subseteq F$ be the optimum centers. Notice $d(p,p') > \epsilon$ for distinct $p,p' \in \mathcal C'$. So for each $i \in F^*$, the number of points $p \in \mathcal C'$ with $d(i,p)\le 1$ is at most $(1/\epsilon)^{O(d)}$. Since ${\rm cost}(F^*) \leq 1$, all $p \in \mathcal C'$ lie within distance 1 from some $i \in F^*$. So the total number of points in $\mathcal C'$ is at most $|F^*| \cdot (1/\epsilon)^{O(d)} = k \cdot (1/\epsilon)^{O(d)}$.

  Similarly, note $d(i,i') > \epsilon$ for distinct $i,i' \in \mathcal F'$ but each $i \in \mathcal F'$ lies within distance at most $1+\epsilon$ from some $p \in \mathcal C'$.
  Therefore, $|\mathcal F'| \leq |\mathcal C'| \cdot (1/\epsilon)^{O(d)} \leq k \cdot (1/\epsilon)^{O(d)}$.

  Next, let $F^{'*}$ denote the solution obtained from $F^*$ as follows. Consider each $i \in F^*$ that is actually covering a point in $\mathcal C'$ (i.e. $d(i, \mathcal C') \leq 1)$). If $i \in \mathcal F'$, add $i$ to $F^{'*}$. Otherwise let $i' \in \mathcal F'$ satisfy $d(i,i') \leq \epsilon$. Add $i'$ to $F^{'*}$. Since each $p \in \mathcal C'$ had distance at most 1 from $F^*$, it then has distance at most $1+\epsilon$ from $F^{'*}$.

  Conversely, let $F' \subseteq \mathcal F'$ be a solution with cost $\alpha$ in the new instance. Consider any $p \in \mathcal C$. Since $d(p,p') \leq \epsilon$ for some $p' \in \mathcal C'$ (it could be $p = p'$) then $d(p, F') \leq d(p,p') + d(p',F') \leq \epsilon + \alpha$.
\end{proof}
In our algorithm, if $|\mathcal C'|$ or $|\mathcal F'|$ exceeds $k \cdot (1/\epsilon)^{O(d)}$
then we declare this instance is a {\bf no} instance and terminate. In fact,
if we terminate in the middle of this step once $|\mathcal C'|$ or $|\mathcal F'|$ becomes too large then a simple implementation runs in time $O((|\mathcal C| + |\mathcal F|) \cdot k \cdot (1/\epsilon)^{O(d)})$.

Also, if $k \leq d \cdot \log 1/\epsilon$ we simply solve the problem using brute force
within the desired time by trying all subsets of $\mathcal F'$ of size at most $k$
(there are $(1/\epsilon)^{O(d^2 \log 1/\epsilon)}$ such subsets to try).

\paragraph{Step 2: Identifying a sparse separator}
We use the following result about separators in Euclidean spaces
that mimics the planar separator theorem.
\begin{theorem}[\cite{bhattiprolu2016voronoi}]
  Let $\mathcal X \subseteq \mathbb R^d$ be a set of $n$ points
  and let $c_d := \lceil 2 \sqrt d\rceil^d + 1$.
  In expected $O(n)$ time, one can compute a set of $\mathcal Z \subseteq \mathbb R^d$
  with $|\mathcal Z| \leq O(n^{1-1/d})$ and a partition $\mathcal X_1, \mathcal X_2$
  of $\mathcal X$ such that: a) In the Voronoi diagram of $\mathcal X \cup \mathcal Z$,
  the Voronoi cells of points $p \in \mathcal X_1$ and $q \in \mathcal X_2$ do not share
  any common point and b) $|\mathcal X_1|, |\mathcal X_2| \leq n \cdot (1-1/c_d)$.
\end{theorem}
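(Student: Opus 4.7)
The plan is to build the separator by finding a balanced axis-aligned slab containing few points of $\mathcal X$, and then populating the central hyperplane of that slab with a $(d-1)$-dimensional grid of Voronoi sites dense enough to block cells of opposite sides from touching.

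First, I would rescale $\mathcal X$ to lie in $[0,1]^d$. For each coordinate axis $i \in [d]$, partition $[0,1]^d$ into $m := \lceil 2\sqrt d\rceil$ equi-width slabs perpendicular to axis $i$. Iterating the pigeonhole principle over the $m$ slabs, and then subdividing each slab further into $\Theta(n^{1/d})$ thinner parallel slabs, yields some thin slab of width $w \asymp m^{-1} \cdot n^{-1/d}$ along some axis whose interior contains at most $O(n^{1-1/d})$ points of $\mathcal X$. Choosing the axis that balances best, the two open halves outside this thin slab have at most $(1 - 1/c_d)\, n$ points each. I then define $(\mathcal X_1, \mathcal X_2)$ by placing points strictly on one side of the slab's central hyperplane $H$ into $\mathcal X_1$, points on the other side into $\mathcal X_2$, and assigning each point interior to the thin slab to the smaller side so the $(1 - 1/c_d)$ upper bound is preserved.

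Next, I would place $\mathcal Z$ as a regular $(d-1)$-dimensional grid of points on $H$ with spacing $w$, intersected with the axis-aligned bounding box of $\mathcal X$. Since that box has side length at most $1$ and $m = O(\sqrt d)$, this gives $|\mathcal Z| = O((1/w)^{d-1}) = O(n^{1-1/d})$. To verify the Voronoi separation property, suppose for contradiction that cells of $p \in \mathcal X_1$ and $q \in \mathcal X_2$ share a point $x$ in the Voronoi diagram of $\mathcal X \cup \mathcal Z$. Since $p$ and $q$ sit on opposite sides of the slab at perpendicular distance $\geq w/2$ from $H$, any $x$ with $\|x-p\| = \|x-q\| \leq \min\{\|x-z\| : z \in \mathcal Z\}$ must lie essentially on $H$. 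But the nearest grid site $z \in \mathcal Z$ has distance at most $(w/2)\sqrt{d-1}$ from the projection of $x$ onto $H$, while a Pythagorean computation shows $\|x-p\|, \|x-q\| \geq \sqrt{(w/2)^2 + 0}$, and the choice $m \geq 2\sqrt d$ makes $z$ strictly closer to $x$ than either $p$ or $q$, the required contradiction.

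The main obstacle is reconciling two opposing constraints in the choice of $m$ and $w$: on one hand, the grid spacing $w$ must be small enough that the in-$H$ displacement $(w/2)\sqrt{d-1}$ of $x$ to its nearest lattice site is dominated by the perpendicular separation $w/2$ between $p,q$ and $H$, forcing $m \gtrsim \sqrt d$; on the other, the slab must be thin enough to capture only $O(n^{1-1/d})$ points while remaining balanced, which requires the $m^d$-fold subdivision and pigeonhole. These two constraints multiply to give exactly $c_d = \lceil 2\sqrt d\rceil^d + 1$. The expected $O(n)$ runtime follows from using linear-time selection to locate the subdividing hyperplanes and a single pass to assign points to sides; randomization enters only through a random axis choice or random shift of the grid to avoid degenerate alignments with $\mathcal X$.
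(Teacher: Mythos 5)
This theorem is stated in the paper only as a black-box citation of Bhattiprolu and Har-Peled; the paper contains no proof of it, so there is no in-paper argument to compare yours against.

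Taken on its own, your hyperplane-slab construction has genuine gaps. First, the balance claim. Pigeonhole over $m\cdot\Theta(n^{1/d})$ thin parallel slabs does produce one slab with $O(n^{1-1/d})$ points, but that slab may sit at the extreme edge of the point set, leaving $n$ points on one side and $0$ on the other. The sentence ``choosing the axis that balances best, the two open halves have at most $(1-1/c_d)n$ points each'' is an assertion, not an argument; there is no reason the sparsest slab along any axis is anywhere near its median, and simultaneously achieving sparsity and $(1-1/c_d)$-balance is exactly the nontrivial step that produces $c_d=\lceil 2\sqrt d\rceil^d+1$. Second, the separation argument. The claim that any $x$ with $\|x-p\|=\|x-q\|\le\min_z\|x-z\|$ ``must lie essentially on $H$'' is false: such an $x$ lies on the perpendicular bisector of $p$ and $q$, which can be far from $H$ when $p$ and $q$ are laterally displaced. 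The correct observation is that, writing $r=\|x-p\|=\|x-q\|$ and using that $p,q$ are at perpendicular distance at least $w/2$ from $H$ on opposite sides, one gets $r\ge\operatorname{dist}(x,H)+w/2$, so the $(d-1)$-dimensional ball $B(x,r)\cap H$ has radius $\sqrt{r^2-\operatorname{dist}(x,H)^2}\ge w/2$; for the lattice on $H$ to place a site inside that ball its pitch must be below roughly $w/\sqrt{d-1}$, not $w$, and your invocation of $m\ge 2\sqrt d$ does not repair this because $m$ controls the slab width, not the grid spacing. Finally, a finite grid confined to the bounding box of $\mathcal X$ need not cover the whole region of $H$ where cells of $\mathcal X_1$ and $\mathcal X_2$ could still meet, since Voronoi cells are unbounded. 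Largely for these reasons, the cited construction uses a compact separating \emph{sphere} rather than a hyperplane slab: balance comes from a smallest-dense-ball plus covering argument (which is where $\lceil 2\sqrt d\rceil^d$ genuinely arises), sparsity from pigeonholing over concentric shells, and compactness of the sphere handles the unbounded-cell issue.
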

Note this is a randomized algorithm, but it is guaranteed to produce such a structure.
It is only the running time that is a random variable.

Let $\mathcal Z$ be the Voronoi separator computed for point set $\mathcal X := \mathcal C' \cup \mathcal F'$ and let $\mathcal X_1, \mathcal X_2$ be the two parts of the partition of $\mathcal X$.

\paragraph{\bf Step 3: Guessing and recursing}
Here, we guess a small set of centers in the optimum solution that lie near the Voronoi separator and recurse on both sides.

For each point $q$ in the Voronoi separator $\mathcal Z$, we guess all points in $F^{'*}$ (the optimum solution for $(\mathcal C', \mathcal F', k)$) that lie within distance at most $2 \cdot (1+\epsilon)$ of $q$. Recall $d(i,i') > \epsilon$ for distinct $i,i' \in \mathcal F'$, so there are at most $(1/\epsilon)^{O(d)}$ points to guess for this point $q$.

Let $\mathcal G$ be the union of these guesses for all $q \in \mathcal Z$, so $|\mathcal G| \leq \zeta := O(k^{1-1/d} \cdot (1/\epsilon)^{O(d)})$. Using standard estimates on binomial coefficients, the number of such guesses to enumerate is at most ${|\mathcal F'| \choose \zeta} \leq 2^{O((1/\epsilon)^d \cdot k^{1-1/d} \cdot \log k)}$.

For each such guess $\mathcal G$, we remove all points in $\mathcal C'$ that are within distance
at most $1+\epsilon$ from $\mathcal G$. Say the remaining points are $C''$. Also let $\mathcal F'' := \mathcal F' - \mathcal G$.
\begin{lemma}
  Suppose $\mathcal G\subseteq \mathcal F'$ is the proper guess. For $j = 1,2$, every point in $\mathcal C'' \cap \mathcal X_j$ is within distance at most $1+\epsilon$ from $F^{'*} \cap (\mathcal X_j - \mathcal G)$.
\end{lemma}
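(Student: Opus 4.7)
The plan is to fix any $p \in \mathcal{C}'' \cap \mathcal{X}_j$, take an optimum center $i^* \in F^{'*}$ that serves $p$ in the filtered instance (so $d(p, i^*) \le 1+\epsilon$, which exists because the preceding lemma bounds the optimum of $(\mathcal{C}', \mathcal{F}', k)$ by $1+O(\epsilon)$, up to the global rescaling of $\epsilon$), and then verify the two membership conditions $i^* \notin \mathcal{G}$ and $i^* \in \mathcal{X}_j$. Both conditions are set up to be exactly what the preceding definitions force: $\mathcal{C}''$ was obtained from $\mathcal{C}'$ precisely by discarding points close to $\mathcal{G}$, and the guess $\mathcal{G}$ was defined so as to capture every center of $F^{'*}$ within radius $2(1+\epsilon)$ of the separator set $\mathcal{Z}$.

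The first condition is immediate. Since $p \in \mathcal{C}''$, the defining filter guarantees $d(p, g) > 1+\epsilon$ for every $g \in \mathcal{G}$, whereas $d(p, i^*) \le 1+\epsilon$, ruling out $i^* \in \mathcal{G}$. The second condition is the geometric heart of the proof. Assume for contradiction that $i^* \in \mathcal{X}_{3-j}$ and consider the line segment $[p, i^*]$. In the Voronoi diagram of $\mathcal{X} \cup \mathcal{Z}$, the endpoint $p$ lies in a Voronoi cell of an $\mathcal{X}_j$-point and $i^*$ lies in a cell of an $\mathcal{X}_{3-j}$-point; the separator property (no $\mathcal{X}_1$-cell shares a point with any $\mathcal{X}_2$-cell) forces the segment to pass through the Voronoi cell of some $z \in \mathcal{Z}$ at an intermediate point $x$. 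Since $z$ is the nearest site of $\mathcal{X} \cup \mathcal{Z}$ to $x$, we have $d(x, z) \le d(x, i^*)$, and by the triangle inequality together with $x \in [p, i^*]$,
\[
d(i^*, z) \le d(i^*, x) + d(x, z) \le 2\, d(i^*, x) \le 2\, d(p, i^*) \le 2(1+\epsilon).
\]
Hence $i^*$ is one of the centers of $F^{'*}$ within $2(1+\epsilon)$ of a separator point, so the proper guess $\mathcal{G}$ must include $i^*$, contradicting the first condition.

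The main obstacle is this geometric step, and in particular the calibration: the factor of $2$ in the guess radius $2(1+\epsilon)$ is not arbitrary but is exactly the factor produced by one application of the triangle inequality at the separator-cell crossing point, fed by the length bound $d(p, i^*) \le 1+\epsilon$. Everything else is bookkeeping: the separator theorem supplies the structural fact that any path from one side to the other must enter a $\mathcal{Z}$-cell, and the filtering step ensures that the total number of candidate centers to enumerate near each $z$ is only $(1/\epsilon)^{O(d)}$, which is what keeps the guess-and-recurse procedure affordable.
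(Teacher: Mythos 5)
Your proof is correct and follows the paper's argument essentially verbatim: take the optimum center $i^*$ serving $p$ with $d(p,i^*)\le 1+\epsilon$, observe $i^*\notin\mathcal G$ directly from the definition of $\mathcal C''$, and derive $i^*\in\mathcal X_j$ by the same segment-crossing argument through a $\mathcal Z$-cell with the identical triangle-inequality calculation $d(z,i^*)\le 2(1+\epsilon)$.
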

\begin{proof}
  Let $p \in \mathcal C'' \cap \mathcal X_j$ and suppose $i \in F^{'*}$ satisfies $d(p,i) \leq 1 + \epsilon$. Since $p \in \mathcal C''$, it must be that $i \notin \mathcal G$.
  If $i \notin \mathcal X_j$, then the straight line connecting $i$ to $p$ must touch
  the Voronoi cell for some $q \in \mathcal Z$.

  Let $t$ be the first point along the $p-i$ segment that touches the Voronoi cell for $q$.
  As this is a Voronoi diagram, we have $d(t,q) \leq d(t,i)$. So,
  \[ d(q,i) \leq d(q,t) + d(t,i) \leq 2 \cdot d(t,i) \leq 2 \cdot (1+\epsilon) \]
  since $d(t,i) \leq d(p,i) \leq 1+\epsilon$. But then $i$ would have been included in $\mathcal G$, a contradiction. So $i \in \mathcal X_j - \mathcal G$.
\end{proof}

Finally, we guess $k_j := |\mathcal F'' \cap F^{'*}|$ for both $j = 1,2$ and independently recurse starting at step 2 on each of the two instances $(\mathcal C'' \cap \mathcal X_j, \mathcal F'' \cap \mathcal X_j, k_j), j = 1,2$. For the appropriate guess $\mathcal G$, each of the two subproblems
has a feasible solution of cost at most $1+\epsilon$.

The base case is when we recurse with an
empty subproblem (i.e. $\mathcal C'' \cap \mathcal X_j = \emptyset$) in which case there are no centers to select. If a subproblem we ever recurse on has $\mathcal C' \neq \emptyset$
yet $\mathcal F' = \emptyset$, we declare this subproblem to be infeasible and return no solution.

If some guess for $\mathcal G$ and $k_1,k_2$ has both recursive calls returning
a feasible solution (i.e. of cost $1+\epsilon$), then adding the centers to $\mathcal G$
produces a feasible solution for this instance $(\mathcal C', \mathcal F', k)$ and we return it.
Otherwise, if all guesses have at least one of the two recursive calls returning no solution,
we return no solution.

\paragraph{\bf Analysis}
We argued throughout the presentation of the algorithm that
if there is a feasible solution $F^{'*}$ (of size $\leq k$ and cost $\leq 1+\epsilon$), 
it would correctly find a solution of cost $1+\epsilon$ for the original (filtered)
instance $(\mathcal C', \mathcal F', k)$ for the branches of recursion that
properly guessed $\mathcal G$ and $k_1, k_2$. One should also note $d(i,i') > \epsilon$ for distinct $i,i' \in \mathcal F'$ holds throughout all recursive calls because it holds after step 1 and we only recurse with subsets of $\mathcal F'$.

Let $\Delta := k \cdot (1/\epsilon)^{O(d)}$, a bound on the initial size of $|\mathcal F \cup \mathcal C|$.
At depth $i$ of the recursion, the size of $|\mathcal F' \cup \mathcal C'|$ is bounded by $(1-1/c_d)^i \cdot \Delta$.
The number of recursive calls spawned from a depth $i$ recursive call is at most $2 \cdot |\mathcal F'|^{(1/\epsilon)^{O(d)} \cdot |\mathcal Z|} \leq \Delta^{(1/\epsilon)^{O(d)} \cdot (1-1/c_d)^{i/2} \cdot \Delta^{1-1/d}}$ (using $1-1/d \geq 1/2$). By summing a geometric series in the exponent and recalling $c_d = d^{O(d)}$ we see the total number of recursive calls reaching depth $i$ is at most
$2^i \cdot \Delta^{O((1/\epsilon)^{O(d)}\cdot d^{O(d)} \cdot \Delta^{1-1/d})}$.
Noting the depth of recursion is at most $\log_{1/(1-1/c_d)}\Delta = O(c_d \cdot \log \Delta)$
and time taken between subsequent recursive calls is polynomial in $|\mathcal F|$, we have that the total running time of steps 1 through 3 is $2^{O((1/\epsilon)^{O(d)} \cdot k^{1-1/d} \cdot \log k + d \log d)}$

\subsection{From $k$-Supplier to $k$-Center}
We now turn to classic $k$-center in Euclidean spaces. Here, we are given points $\mathcal C$
and a value $k$. The goal is to find $k$ points $F \subseteq \mathbb R^d$ minimizing
$\max_{p \in \mathcal C}$.
\begin{theorem}\label{thm:kcentercont}
  For any $d \geq 1$ and $\epsilon > 0$, there is a $(1+\epsilon)$-approximation
  for instances of $k$-center in $\mathbb R^d$ with running time
  $O(|\mathcal C| \cdot k \cdot (1/\epsilon)^{O(d)} + 2^{O((1/\epsilon)^{O(d)} \cdot k^{1-1/d} \cdot \log k + d \log d)})$.
\end{theorem}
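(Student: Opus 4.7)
The plan is to reduce $k$-center to $k$-supplier and then invoke Theorem \ref{thm:ksupplier} as a black box. The key observation is that an optimal $k$-center solution $F^*$ can be perturbed to lie on a fine grid with only an additive $\epsilon \cdot OPT$ distortion, so it suffices to construct a set $\mathcal F$ of ``candidate'' centers dense enough to approximate any $c \in F^*$. Naively placing a grid around every point of $\mathcal C$ would give $|\mathcal F| = |\mathcal C| \cdot (1/\epsilon)^{O(d)}$ and blow up the $|\mathcal C'| \cdot |\mathcal F|$ preprocessing term inside Theorem \ref{thm:ksupplier}. The fix is to place grids only around a sparsified version $\mathcal C'$ of the input, which still covers every optimal center to within distance $\epsilon + \epsilon = O(\epsilon)$ by the triangle inequality.

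Concretely, I would first run the standard Gonzalez $2$-approximation in $O(|\mathcal C| \cdot k)$ time to obtain a value $R^*$ with $OPT \le R^* \le 2\cdot OPT$, and rescale so $R^*=1$ and thus $OPT \in [1/2,1]$. Then I would apply the same $\epsilon$-filtering as in Step 1 of the $k$-supplier algorithm to $\mathcal C$, producing $\mathcal C' \subseteq \mathcal C$ with $d(p,p') > \epsilon$ for distinct $p,p' \in \mathcal C'$; a packing argument (each optimal center covers at most $(1/\epsilon)^{O(d)}$ points of $\mathcal C'$ at distance $\le 1$) gives $|\mathcal C'| \le k \cdot (1/\epsilon)^{O(d)}$, and with early termination the filtering takes $O(|\mathcal C| \cdot k \cdot (1/\epsilon)^{O(d)})$ time. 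For each $p \in \mathcal C'$, I would then place an axis-aligned grid of side length $\Theta(\epsilon/\sqrt d)$ covering the ball $B(p, 1)$, contributing $(1/\epsilon)^{O(d)} \cdot d^{O(d)}$ candidate centers per $p$, so the total candidate set has $|\mathcal F| \le k \cdot (1/\epsilon)^{O(d)}$ (with the $d^{O(d)}$ leading factor matching the $d \log d$ exponent in the target bound).

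Finally, I would invoke Theorem \ref{thm:ksupplier} on the instance $(\mathcal C', \mathcal F, k)$ with parameter $\epsilon$. Correctness: if $F^*$ has cost $\le 1$, then for each $c \in F^*$ there is some $p \in \mathcal C$ with $d(p,c) \le 1$, and some $p' \in \mathcal C'$ with $d(p',p) \le \epsilon$, hence $d(p',c) \le 1+\epsilon$, so the grid around $p'$ contains a point within distance $\epsilon$ of $c$. Thus $(\mathcal C', \mathcal F, k)$ admits a $k$-supplier solution of cost $1 + O(\epsilon)$, and Theorem \ref{thm:ksupplier} returns one of cost at most $(1+\epsilon)(1+O(\epsilon))$, which transfers back to a cost of $(1+O(\epsilon)) \cdot OPT$ on the original $k$-center instance after undoing the sparsification (the displacement from any $p \in \mathcal C$ to its representative $p' \in \mathcal C'$ costs an extra $\epsilon$); rescaling $\epsilon$ by an absolute constant gives the desired $(1+\epsilon)$-approximation. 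Summing the costs of the preprocessing and the call to Theorem \ref{thm:ksupplier}, whose preamble $O(|\mathcal C'|\cdot|\mathcal F|)$ is only $O(k^2 \cdot (1/\epsilon)^{O(d)})$, yields the claimed $O(|\mathcal C| \cdot k \cdot (1/\epsilon)^{O(d)} + 2^{O((1/\epsilon)^{O(d)} \cdot k^{1-1/d} \cdot \log k + d \log d)})$ bound. The only real obstacle is the bookkeeping needed to avoid the $|\mathcal C|^2$ trap described above; once grids are anchored only at $\mathcal C'$ the rest is a direct appeal to Theorem \ref{thm:ksupplier}.
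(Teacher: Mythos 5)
Your proposal is correct and follows essentially the same route as the paper: filter $\mathcal C$ down to a well-separated $\mathcal C'$ of size $k\cdot(1/\epsilon)^{O(d)}$, place an $\epsilon$-net (you use a grid) around each surviving point, and hand the resulting finite instance to the $k$-supplier algorithm. One small slip: you build the grid over $B(p',1)$, but your correctness argument only gives $d(p',c)\le 1+\epsilon$ for an optimal center $c$, so you should cover $B(p',1+\epsilon)$ as the paper does (or accept a $2\epsilon$ rather than $\epsilon$ displacement to the nearest grid point); this only changes constants and does not affect the result.
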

\begin{proof}
  Just like in the $k$-supplier algorithm, it suffices to give an algorithm
  with running time $2^{O((1/\epsilon)^{O(d)} \cdot k^{1-1/d} \cdot \log k + d \log d)}$
  that either finds a solution with cost $1+\epsilon$ or determines there is no solution with
  cost at most $1$.

  Begin by forming $\mathcal C'$ as in step 1 of the $k$-supplier algorithm:
  so $d(p,p') > \epsilon$ for distinct $p,p' \in \mathcal C$ but $d(p, \mathcal C') \leq \epsilon$ for each $p \in \mathcal C$. Just like in step 1 of the $k$-supplier algorithm, if there is a solution with cost $\leq 1$, then $|\mathcal C'| \leq k \cdot (1/\epsilon)^{O(d)}$.
  If $|\mathcal C'|$ is larger than this, declare there is no solution.

  Finally, for each $p \in \mathcal C'$ we let $\mathcal F'_p$ be any $\epsilon$-net
  of the ball of radius $1+\epsilon$ centered at $\mathcal C'$. Then $|\mathcal F'_p| \leq (1/\epsilon)^{O(d)}$ yet $d(q,\mathcal F'_p) \leq \epsilon$ for any $q$ with $d(p,q) \leq 1+\epsilon$.
  Just like in step 1 of the $k$-supplier algorithm, the optimum solution to the $k$-center
  instance $\mathcal C'$ has a feasible solution with cost at most $1+\epsilon$
  by restricting the possible locations to $\mathcal F'=\cup_{p \in \mathcal C'} \mathcal F'_p$. Run the $k$-supplier algorithm
  with these $\mathcal C'$ and $\mathcal F'$.
\end{proof}

\section{Hardness}\label{sec:hardness}
In this section, we prove our hardness results. First, we describe the result for 
the Euclidean $k$-supplier problem. 

Recall that $O^*$-notation suppresses factors polynomial in the input size. We use randomized Exponential Time Hypothesis (rETH) as our complexity theoretic assumption. Given a 3-SAT instance, we denote by $n$ and $m$ the number of variables and clauses in the instance, respectively. Then, rETH states that there is a constant $c>0$ such that there is no randomized algorithm that decides 3-SAT in time $O^*(2^{c\cdot n})$ with (two-sided) error probability at most $\frac{1}{3}$ \cite{dell2014exponential}. Using the sparsification lemma \cite{jansen2010kernelization} one can show the following, see Exercises 14.1 of \cite{cygan2015parameterized}.

\begin{theorem}\label{hardness: rETH}
Under rETH, there exists a constant $c>0$ such that there is no randomized algorithm that decides 3-SAT in time $O^*(2^{c\cdot (n+m)})$ with (two-sided) error probability at most $\frac{1}{3}$ where $n$ and $m$ are the numbers of variables and clauses in the 3-SAT instance, respectively.
\end{theorem}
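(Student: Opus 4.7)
The plan is to derive this strengthening of rETH from its standard (``$n$-only'') form stated earlier in this section by invoking the sparsification lemma of Impagliazzo, Paturi and Zane. Recall that for every $\epsilon>0$ there is a constant $C_\epsilon$ such that, in time $2^{\epsilon n}\cdot \mathrm{poly}(n+m)$, any 3-SAT formula $\phi$ on $n$ variables and $m$ clauses can be replaced by a list $\phi_1,\dots,\phi_N$ of at most $N\le 2^{\epsilon n}$ formulas, each on the same $n$ variables and each having at most $C_\epsilon\cdot n$ clauses, with $\phi$ satisfiable if and only if some $\phi_i$ is. Let $c_0>0$ be the constant given by the $n$-only form of rETH.

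Suppose for contradiction that for every constant $c>0$ there is a randomized $O^*(2^{c(n+m)})$-time algorithm $\mathcal{A}_c$ for 3-SAT with two-sided error at most $1/3$. I would fix $\epsilon := c_0/2$, which in turn fixes $C_\epsilon$, and then choose $c$ so small that $\epsilon + c\cdot(1+C_\epsilon) < c_0$; concretely, $c := c_0/\bigl(4(1+C_\epsilon)\bigr)$ suffices. I would first amplify $\mathcal{A}_c$ by taking the majority vote of $\Theta(n)$ independent trials, producing an algorithm $\mathcal{A}'_c$ with error at most $2^{-2\epsilon n}$ whose running time is still $O^*(2^{c(n+m)})$. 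Given an arbitrary 3-SAT instance, I would then sparsify it, run $\mathcal{A}'_c$ on each of the $N\le 2^{\epsilon n}$ sparse subinstances, and output ``satisfiable'' iff some call does.

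Correctness is immediate from the sparsification lemma, and the union bound over the at most $2^{\epsilon n}$ subinstances yields total error at most $2^{\epsilon n}\cdot 2^{-2\epsilon n}=2^{-\epsilon n}\le 1/3$ for all sufficiently large $n$ (small $n$ can be dispatched in constant time by brute force). The total running time is at most $2^{\epsilon n}\cdot O^*\bigl(2^{c(n+C_\epsilon n)}\bigr) = O^*\bigl(2^{(\epsilon + c(1+C_\epsilon))n}\bigr) \subseteq O^*(2^{c_0 n})$, contradicting rETH in its standard form. The main obstacle is essentially bookkeeping --- making sure the $\mathrm{poly}(n+m)$ factors hidden inside $O^*$, together with the $\Theta(n)$-fold amplification overhead, all stay absorbed into the final $O^*$ --- while the conceptual content is just the sparsification lemma together with a routine Chernoff amplification and a union bound.
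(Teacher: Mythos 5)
Your proof is correct and follows exactly the route the paper points to: the paper does not spell out an argument but cites the sparsification lemma and Exercise 14.1 of Cygan et al., and your sparsification plus Chernoff-amplification plus union-bound argument is precisely that exercise adapted to the randomized setting. The arithmetic with $\epsilon = c_0/2$ and $c = c_0/\bigl(4(1+C_\epsilon)\bigr)$ checks out, giving a final running time of $O^*(2^{(3c_0/4)n}) \subseteq O^*(2^{c_0 n})$ and total error $2^{-\epsilon n} \le 1/3$ for large $n$, so nothing is missing.
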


There is a known reduction from 3-SAT to Vertex Cover such that the number of vertices in the Vertex Cover instance is linear in the number of variables and clauses of 3-SAT instance, see \cite{cygan2015parameterized} for this reduction. Using such reduction and Theorem \ref{hardness: rETH} we have the following hardness result on Vertex Cover.

\begin{theorem}[Hardness of Vertex Cover]\label{hardness: vertex cover}
Under rETH, there is a constant $c>0$ such that there is no randomized algorithm that decides Vertex Cover with $n$ vertices in time $O^*(2^{c\cdot n})$ with (two-sided) error probability at most $\frac{1}{3}$.
\end{theorem}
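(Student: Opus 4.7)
The plan is to obtain this bound by a standard linear-blowup reduction from 3-SAT to Vertex Cover and then to plug the resulting blowup into Theorem \ref{hardness: rETH}. The reduction itself is classical (see, e.g., \cite{cygan2015parameterized}); what matters for us is only that the number of vertices it produces is linear in $n+m$, so a randomized algorithm that decides Vertex Cover too quickly can be pulled back to contradict rETH.

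First I would recall the textbook construction. Given a 3-SAT instance $\varphi$ with $n$ variables and $m$ clauses, build a graph $G$ as follows: for each variable $x_i$ introduce two vertices, one for the literal $x_i$ and one for $\bar x_i$, joined by an edge (the variable gadget); for each clause $C_j$ introduce three vertices, one per literal occurrence, joined pairwise into a triangle (the clause gadget); finally, for each occurrence of a literal in a clause, add an edge between the corresponding clause vertex and the matching literal vertex in its variable gadget. The resulting graph has exactly $N = 2n + 3m$ vertices. A routine check (covering each variable edge forces at least $n$ vertices and each clause triangle forces at least $2m$, and a satisfying assignment together with two unsatisfied-literal vertices per clause attains this) shows that $\varphi$ is satisfiable if and only if $G$ has a vertex cover of size at most $k := n + 2m$. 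The construction can be carried out in polynomial time in $n+m$.

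Now I would combine this with Theorem \ref{hardness: rETH}. Suppose for contradiction that for every constant $c>0$ there were a randomized algorithm $A_c$ deciding Vertex Cover on $N$-vertex graphs in time $O^*(2^{cN})$ with two-sided error at most $1/3$. Let $c^* > 0$ be the constant from Theorem \ref{hardness: rETH}, and choose $c := c^*/6$ so that $c \cdot (2n+3m) \le (c^*/6) \cdot 5(n+m) < c^*\cdot (n+m)$. Given a 3-SAT instance, build $G$ as above in polynomial time, then run $A_c$ on $(G,k)$. This gives a randomized algorithm deciding 3-SAT in time $O^*(2^{cN}) = O^*(2^{c(2n+3m)}) = O^*(2^{c^*(n+m)})$ with two-sided error at most $1/3$, contradicting Theorem \ref{hardness: rETH}. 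Hence such a $c$ must exist, giving the statement with constant $c = c^*/6$ (or any smaller positive value).

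I do not expect a genuine obstacle here: the reduction is entirely standard and there is no new gadgetry to design; the only point to be careful about is bookkeeping of constants in the exponent and verifying that the reduction preserves the randomized bound (which it trivially does because it is a polynomial-time deterministic many-one reduction and the error probability of $A_c$ is unchanged when applied to a single deterministically-constructed instance). If one wished to push $c$ as close as possible to $c^*$, one could use a tighter reduction (for example, via the sparsified 3-SAT form where $m = O(n)$) but such optimization is unnecessary for the stated theorem.
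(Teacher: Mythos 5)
Your proposal is correct and follows essentially the same route as the paper: invoke the standard linear-size 3-SAT-to-Vertex-Cover reduction (which the paper simply cites from \cite{cygan2015parameterized}) and combine it with Theorem~\ref{hardness: rETH}, noting that a deterministic polynomial-time reduction preserves the randomized error bound. You merely spell out the gadget construction and the constant bookkeeping explicitly, which the paper leaves implicit; aside from a harmless slackness (bounding $2n+3m$ by $5(n+m)$ and writing an equality where a containment of time bounds is meant), there is nothing to change.
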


The last ingredient we need to show our hardness result for Euclidean $k$-supplier problem is the famous Johnson-Lindenstrauss dimentionality reduction \cite{johnson1984extensions} or JL lemma for short.

\begin{theorem}[JL lemma, reformulated]\label{hardness: JL}
Consider $n$ points $x_1,...,x_n$ in $\mathbb{R}^{\ell}$. There exists a linear map $A:\mathbb{R}^\ell\to\mathbb{R}^{c_{JL}\cdot\log n}$ where $c_{JL}>0$ is a constant such that for all $1\leq i,j\leq n$ we have
\begin{equation}\label{hardness: JL ineq}
0.9\cdot\norm{x_i-x_j}\leq \norm{A(x_i)-A(x_j)}\leq 1.1\cdot\norm{x_i-x_j},
\end{equation}
where the norms are $l_2$-norm. Furthermore, $A$ can be computed in polynomial time by a randomized algorithm such that (\ref{hardness: JL ineq}) holds for all pairs of given points with probability at least $\frac{2}{3}$.
\end{theorem}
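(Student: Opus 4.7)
The plan is to construct $A$ as a suitably scaled random Gaussian matrix and establish the distortion bound by combining concentration of Gaussian norms with a union bound over the $\binom{n}{2}$ pairs. Set $k := \lceil c_{JL}\log n\rceil$ with $c_{JL}$ to be fixed at the end, and let $A := \frac{1}{\sqrt{k}}\,M$, where $M\in\mathbb{R}^{k\times\ell}$ has i.i.d.\ standard normal entries. This map is linear, produced in polynomial time by sampling $k\cdot\ell$ Gaussians, so the ``polynomial time randomized algorithm'' requirement is automatic; the content of the lemma is that the distortion bound holds with probability at least $\tfrac{2}{3}$.

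First I would fix an arbitrary nonzero $v\in\mathbb{R}^\ell$ and compute the distribution of $\norm{Av}^2$. By rotational invariance of the standard Gaussian, each row of $M$ contracted with $v$ is distributed as $N(0,\norm{v}^2)$, and these $k$ coordinates are independent. Hence
\begin{equation}
\frac{k\,\norm{Av}^2}{\norm{v}^2}\;\sim\;\chi^2_k.
\end{equation}
The core technical step is a Chernoff-style two-sided tail bound: for every $\epsilon\in(0,1)$,
\begin{equation}
\Pr\bigl[\,\bigl|\norm{Av}^2-\norm{v}^2\bigr|>\epsilon\,\norm{v}^2\,\bigr]\;\le\;2\,e^{-c\,\epsilon^2\,k},
\end{equation}
for an absolute constant $c>0$. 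This is derived from the MGF identity $E[e^{tZ^2}]=(1-2t)^{-1/2}$ for $t<1/2$ applied to $\sum_{i=1}^k Z_i^2-k$, followed by Markov's inequality and optimization of $t$; this is the one genuine computation in the proof.

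Next I apply the tail bound to each of the $\binom{n}{2}$ difference vectors $v_{ij}:=x_i-x_j$. By linearity, $A(x_i)-A(x_j)=Av_{ij}$. Choose $\epsilon=0.19$: since $0.9^2=0.81=1-0.19$ and $1.1^2=1.21\ge 1+0.19$, the event $\norm{Av_{ij}}^2\in[(1-\epsilon)\norm{v_{ij}}^2,(1+\epsilon)\norm{v_{ij}}^2]$ implies $0.9\cdot\norm{v_{ij}}\le\norm{Av_{ij}}\le 1.1\cdot\norm{v_{ij}}$. A union bound over all $\binom{n}{2}\le n^2$ pairs shows the failure probability is at most $2n^2\,e^{-c\epsilon^2 k}$. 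Choosing $c_{JL}$ sufficiently large (concretely $c_{JL}>3/(c\epsilon^2)$ suffices) drives this below $\tfrac{1}{3}$, completing the proof.

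The main (and essentially only) obstacle is establishing the $\chi^2_k$ tail bound with the right exponent; every other step is either rotational invariance of the Gaussian, linearity, or a direct union bound. A minor bookkeeping point is converting between bounds on $\norm{Av}^2$ and $\norm{Av}$, handled by the choice $\epsilon=0.19$ above. Alternative constructions (e.g., rescaled $\pm 1$ matrices or a projection onto a uniformly random $k$-subspace) would work identically once their analogous concentration inequality is in hand, but the Gaussian route is the cleanest to present.
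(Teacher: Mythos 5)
Your proof is correct. The Gaussian random projection $A=\tfrac{1}{\sqrt k}M$, the observation that $k\norm{Av}^2/\norm{v}^2\sim\chi^2_k$, the two-sided Chernoff bound $2e^{-c\epsilon^2k}$, the choice $\epsilon=0.19$ (so $1-\epsilon=0.9^2$ and $1+\epsilon\le 1.1^2$), and the union bound over the $\binom{n}{2}$ difference vectors give exactly the stated guarantee, with linearity and randomized polynomial-time computability immediate from the construction. For comparison: the paper does not prove this statement at all --- it states it as a reformulation of the Johnson--Lindenstrauss lemma and cites the original source, whose argument used projection onto a uniformly random $k$-dimensional subspace rather than a Gaussian matrix; your route is the standard modern simplification and is equivalent in substance, which you yourself note in your closing remark about alternative constructions. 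The only bookkeeping caveat is that your concrete choice $c_{JL}>3/(c\epsilon^2)$ makes the failure probability at most $2n^{2-c\epsilon^2 c_{JL}}\le 2/n$, which is below $\tfrac13$ only once $n\ge 6$; for smaller $n$ one simply takes $c_{JL}$ a bit larger (or notes the target dimension already exceeds what is needed), so this is not a genuine gap.
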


Now we are ready to state and prove our hardness result.
\hardness*
\begin{proof}
We show a gap-introducing reduction from Vertex Cover to Euclidean $k$-supplier problem. 
Consider a Vertex Cover instance $(G=(V,E),k)$ where $V=\{1,2,...,n\}$. We construct an instance of Euclidean $k$-supplier as follows: define the client set as $\mathcal{C}:=\{e_i+e_j:~\forall~(i,j)\in E\}$ and the set of facilities as $\mathcal{F}:=\{e_i:~\forall 1\leq i\leq n\}$ where $e_i$ is the standard unit vector in $\mathbb{R}^n$, i.e., $e_i$ has $1$ in $i$-th coordinate and $0$ elsewhere. 

Suppose $G$ has a vertex cover $S=\{i_1,...,i_k\}$. We claim the union of balls of radius $1$ around each center $e_{i_j}$ for all $1\leq j\leq k$ covers all the clients. The reason is that for each client $e_r+e_s$, either $r$ or $s$ is in $S$. W.l.o.g., we assume $r\in S$, and therefore we open $e_r$ as a center with radius $1$ and the claim follows by noting that $\norm{e_r-(e_r+e_s)}=1$.

Next, we prove the other direction. Suppose $G$ does not have a vertex cover of size $k$. We show that an optimal solution for $k$-supplier instance is at least $\sqrt{3}$. Let $F^*$ be an optimal set of centers. We have the following fact.
\begin{claim}
There is a client $e_r+e_s$ that is assigned to $e_{i^*}\in F^*$ where $i\neq r,s$.
\end{claim}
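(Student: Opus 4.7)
My plan is a short proof by contradiction that exploits the one-to-one correspondence between the facilities $e_i$ in the $k$-supplier instance and the vertices of $G$. Specifically, I will assume the negation of the claim --- that every client $e_r + e_s$ is served by some $e_{i^*} \in F^*$ with $i^* \in \{r, s\}$ --- and then extract from $F^*$ a vertex cover of $G$ of size at most $k$, contradicting the hypothesis that $G$ has none.

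The argument would proceed as follows. First, define $S := \{\, i : e_i \in F^* \,\} \subseteq V$, a vertex subset of size $|S| = |F^*| \leq k$. Next, I would verify that $S$ covers every edge: for an arbitrary $(r,s) \in E$, the associated client $e_r + e_s \in \mathcal{C}$ is, by the contradiction hypothesis, covered by some $e_{i^*} \in F^*$ with $i^* \in \{r, s\}$; this $i^*$ lies in $S$ and witnesses the covering of edge $(r,s)$. Thus every edge has at least one endpoint in $S$, so $S$ is a vertex cover of size at most $k$, contradicting the assumption on $G$. Consequently some client must be served by a center $e_{i^*}$ with $i^* \notin \{r,s\}$, as claimed.

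There is no real obstacle in this step --- it is essentially a pigeonhole observation that encodes the Vertex Cover structure into the $k$-supplier assignment. The point of isolating this claim is what comes afterwards: once a client $e_r + e_s$ is served by a facility $e_{i^*}$ with $i^* \notin \{r,s\}$, the vector $e_{i^*} - (e_r + e_s)$ has exactly three nonzero coordinates of absolute value $1$, so $\|e_{i^*} - (e_r+e_s)\|_2 = \sqrt{3}$. Hence the optimum cost in the NO case is at least $\sqrt{3}$, whereas the YES case (handled earlier) achieves cost at most $1$. This constant gap, preserved up to a $1.1/0.9$ factor after applying the JL embedding of Theorem~\ref{hardness: JL}, is what enables the final gap-introducing reduction and yields the desired $\alpha$-inapproximability under rETH.
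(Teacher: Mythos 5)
Your proof is correct and follows essentially the same route as the paper: both negate the claim, form $S := \{i : e_i \in F^*\}$, observe $|S| \le k$, and note that the negated hypothesis forces $S$ to hit every edge, contradicting the assumption that $G$ has no vertex cover of size $k$. Your write-up merely spells out the edge-covering verification that the paper compresses into ``it is easy to see.''
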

Note $\norm{e_{i^*}-(e_r+e_s)}=\sqrt{3}$ and therefore the optimal solution for the $k$-supplier instance is at least $\sqrt{3}$. So it remains to prove the claim.
\begin{proof}(of claim)
Suppose not. Then, every client $e_r+e_s$ is assigned to either center $e_r$ or $e_s$. It is easy to see if we define $S$ as the set of all vertices $i$ such that $e_i\in F^*$, then $S$ is a feasible vertex cover for $G$, a contradiction.
\end{proof}

We can use the JL lemma (Theorem \ref{hardness: JL}) to further reduce the above $k$-supplier instance in $\mathbb{R}^n$ to a $k$-supplier instance in dimension $c_{JL}\cdot\log n$ (recall that $c_{JL}$ is a constant in the JL lemma). Then, with probability at least $\frac{2}{3}$, a YES-instance of Vertex Cover is mapped to an instance of $k$-supplier in dimension $c_{JL}\cdot\log n$ with optimal solution at most $1.1$ and a NO-instance of Vertex Cover is mapped to an instance of $k$-supplier in dimension $c_{JL}\cdot\log n$ with optimal solution at least $\sqrt{3}\cdot 0.9$. 

Finally, set $\alpha:= \frac{\sqrt{3}\cdot 0.9}{1.1}> 1.4$ and $\bar \epsilon:=\frac{1}{2\cdot c_{JL}}$ (in the statement of Theorem). We finish the proof by way of contradiction. Suppose there is a $1.5$-approximation algorithm for $k$-supplier in $\mathbb{R}^d$ that runs in $O^*(2^{2^{\bar\epsilon\cdot\delta\cdot d}\cdot k^{1-\delta}})$. Therefore, we can decide Vertex Cover in time $O^*(2^{2^{\bar\epsilon\cdot\delta\cdot d}\cdot k^{1-\delta}})$ with error probability at most $\frac{1}{3}$ where $d=c_{JL}\cdot\log n$. Plugging their values of $d,\alpha,\bar \epsilon$ and noting that $k\leq n$, we have a randomized algorithm that decides Vertex Cover in time $O^*(2^{n^{1-\frac{\delta}{2}}})$ for some $0<\delta<1$ and this contradicts the hardness result for Vertex Cover (Theorem \ref{hardness: vertex cover}).

\end{proof}

\subsection{Hardness of $k$-center}

Our hardness reduction is from 3-Planar Vertex Cover which is a restricted version of Vertex Cover on planar graphs of maximum degree 3. We need the following complexity result based on Exponential Time Hypothesis (ETH) \cite{ImpagliazzoPZ01}. 

\begin{proposition}\label{prop:planarVC}
There is no $2^{o(\sqrt{n})}$ time algorithm for 3-Planar Vertex Cover, unless ETH fails, where $n$ is the number of vertices.  
\end{proposition}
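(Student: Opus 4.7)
The plan is a standard ETH-based argument, stitching together (a) the strong form of ETH coming from the sparsification lemma, (b) a reduction from general 3-SAT to a planar variant that has at most a quadratic blow-up in size, and (c) a linear-size reduction from the planar variant to 3-Planar Vertex Cover. The square-root gap in the conclusion will come precisely from the one quadratic step in the chain.

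First I would invoke the sparsification lemma to strengthen ETH: under ETH there is no algorithm solving 3-SAT in time $2^{o(n+m)}$, where $n$ is the number of variables and $m$ the number of clauses. This is the form of ETH one actually needs, because downstream reductions produce instances whose size is a function of $n+m$ rather than of $n$ alone.

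Next I would carry out a reduction from 3-SAT to Planar 3-SAT using Lichtenstein's crossover gadget. Given a 3-SAT instance $\phi$ with $n$ variables and $m$ clauses, its variable-clause incidence graph $G_\phi$ has $O(n+m)$ vertices and $O(m)$ edges. Drawing $G_\phi$ in the plane introduces at most $O((n+m)^2)$ edge crossings, each of which is replaced by a constant-size planar gadget that preserves satisfiability. The resulting Planar 3-SAT instance $\phi'$ still has $O((n+m)^2)$ variables and clauses. This quadratic blow-up is the key (and essentially unavoidable) step, and is the main obstacle in obtaining a tight bound of this shape. Then I would apply the textbook reduction from Planar 3-SAT to 3-Planar Vertex Cover: replace each variable by a planar consistency cycle and each clause by a triangle, connected by edges to the corresponding literal vertices. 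This reduction is linear, preserves planarity, produces a graph of maximum degree $3$, and turns a satisfying assignment into a vertex cover of the appropriate size and vice versa. Composing the two reductions yields, from any 3-SAT instance of size $N := n+m$, a 3-Planar Vertex Cover instance on $n' = O(N^2)$ vertices.

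Finally I would close the argument by contradiction. Suppose 3-Planar Vertex Cover admitted a $2^{o(\sqrt{n'})}$-time algorithm. Running this algorithm on the instance produced by the composed reduction would decide 3-SAT in time $2^{o(\sqrt{N^2})} \cdot N^{O(1)} = 2^{o(N)} = 2^{o(n+m)}$, contradicting ETH in the strong form obtained from the sparsification lemma. Hence no such algorithm exists, which is the claim of the proposition. The only delicate part of the write-up is to be careful that the planarization step really does blow the size up by no more than a quadratic factor, since the square-root in the conclusion matches exactly that exponent.
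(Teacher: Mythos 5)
Your proof is correct and takes essentially the same route as the paper: the paper simply cites Theorem 14.9 of Cygan et al.\ for the $2^{o(\sqrt{n})}$ lower bound on Planar Vertex Cover and then observes that Lichtenstein's second reduction produces instances of maximum degree $3$, whereas you unroll the chain that theorem is built on --- sparsification, Lichtenstein's crossover planarization with its quadratic size blow-up, and the linear cycle/triangle reduction to Vertex Cover --- before running the contradiction. The two write-ups differ only in that yours is self-contained while the paper delegates the intermediate steps to the textbook reference.
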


\begin{proof}
The proposition for Planar Vertex Cover essentially follows by combining two reductions due to \cite{lichtenstein1982planar}: (i) 3-SAT to Planar 3-SAT and (ii) Planar 3-SAT to Planar Vertex Cover. For a more formal exposition, see Theorem 14.9 \cite{CyganFKLMPPS15} which states that there is no $2^{o(\sqrt{n})}$ time algorithm for Planar Vertex Cover, unless ETH fails. Our proposition follows from the fact that the second reduction ensures that the constructed graph has maximum degree 3. 
\end{proof}

\hardnesscont*

\begin{proof}
For proving the theorem, we use the gap-preserving reduction of Feder and Greene \cite{feder1988optimal} from 3-Planar Vertex Cover to Planar $k$-center. They start with any embedding of the planar graph where each edge is replaced by an odd length path. Let $L$ be the sum of the lengths of the edges in the embedded graph. Their reduction ensures that if there is a vertex cover of size $k_1$, then there is a $k$-center solution of radius 1, for a suitable $k\le k_1+L$, and if there is no vertex cover of size $k_1$, then there is no $k$-center solution of radius at most 1.82. By using the planar embedding scheme in \cite{shiloach1976linear}, which ensures that $L=O(n^2)$, it follows that in the constructed instances of $k$-center, $k=O(n^2)$. Thus, using an $\alpha$-approximation algorithm for Planar $k$-center with $\alpha\le 1.82$ that runs in $2^{o(k^{1/4})}\cdot n^{O(1)}$ time, we can solve 3-Planar Vertex Cover exactly in $2^{o(k^{1/4})}\cdot n^{O(1)}=2^{o(\sqrt{n})}\cdot n^{O(1)}$ time. This contradicts Proposition \ref{prop:planarVC}, and hence our claim must be true. 
\end{proof}

\section{Non-Uniform $k$-Center}\label{sec:nukc}
By standard scaling argument \cite{chakrabarty2020non}, we can assume that the optimal dilation is 1. Let OPT be any optimal set of balls. We denote a ball with center $c$ and radius $r$ by $B(c,r)$. Consider any set of points $S$. A ball $B$ is said to cover $S$ if the points of $S$ are contained in $B$. A set of balls $\B$ is said to cover $S$ if the points of $S$ are contained in the union of the balls in $\B$. 
\subsection{The General Metric Case}
Note that $\mathcal{C}$ is the set of points that we need to cover and $\mathcal{F}$ is the set of centers. Let $n=| \mathcal{C}\cup \mathcal{F}|$. In this case, we give a simple 3-approximation that runs in $2^{O(k)}n^2$ time. 

\paragraph{The Algorithm.} Let $\sol$ be the solution set of balls which is initialized to $\emptyset$. Also let $\mathcal{C}'\subseteq \mathcal{C}$ be the set of points left to cover, i.e., the set of points which are not covered by $\sol$. If $\mathcal{C}'$ is empty or $|\sol|=k$, then terminate. Otherwise, proceed as follows. Consider a point $p\in \mathcal{C}'$. Suppose $r_i$ be the radius of a ball in OPT that covers $p$ and assume that we know this index $i$ where $1\le i\le t$. Consider the point $c\in \mathcal{F}$ closest to $p$. Add the ball $B(c,3r_i)$ to $\sol$. Repeat the above steps. 

The assumption that for a point we know the optimal index $i$ can be removed in a trivial manner by trying all $t\le k$ possible choices. The algorithm runs for at most $k$ iterations, and thus we make the assumption for at most $k$ points. The total number of choices to consider is thus $k^{O(k)}$. Now, in each iteration we need to find $c$ and the ball $B(c,3r_i)$, and remove the points in $B(c,3r_i)$ from consideration. We can precompute and store the closest point in $\mathcal{F}$ for each point in $\mathcal{C}$. This helps us implement each iteration in $O(n^2)$ time. Hence, the algorithm runs in total $2^{O(k\log k)}n^2$ time. Next, we prove the correctness. 

\begin{lemma}
Consider the case when the algorithm makes all correct choices. For every iteration $j$ of the algorithm except the last one, where $1\le j\le k$, there is a ball $B_l^*$ in OPT, such that in the beginning of the iteration, $B_l^*$ is not covered by $\sol$, but in the end of the iteration, $B_l^*$ is covered by $\sol$ once we add the ball $B(c,3r_i)$ to $\sol$. Moreover, the radius of $B_l^*$ is $r_i$. 
\end{lemma}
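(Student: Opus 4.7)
The plan is to verify the three assertions of the lemma (existence of a fresh OPT ball, its coverage by the newly added $\sol$-ball, and its radius equals $r_i$) via a direct triangle-inequality argument, after pinning down what the ``correct choices'' hypothesis gives us.

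First, I would unpack the hypothesis. ``All correct choices'' means that at every iteration $j$, when the algorithm picks a yet-uncovered point $p\in\mathcal{C}'$, the guessed index $i$ is indeed the index of a radius $r_i$ of some OPT ball that covers $p$. Since OPT is a feasible solution for the whole instance, and $p\in\mathcal{C}\subseteq\mathcal{C}$ is not yet covered by $\sol$, such an OPT ball exists, and we fix one: call it $B_l^*=B(c^*,r_i)$ with $c^*\in\mathcal{F}$. By construction its radius is exactly $r_i$, which handles the last sentence of the lemma.

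Next I would verify the ``not covered at the beginning'' claim. Because $p\in B_l^*$ and $p\in\mathcal{C}'$ (i.e., $p$ lies in no ball of $\sol$ at the start of iteration $j$), the set $B_l^*$ cannot be a subset of $\bigcup_{B\in\sol}B$; that is, $B_l^*$ is not covered by $\sol$ before we act. This is essentially by definition of $\mathcal{C}'$.

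Finally, I would argue that after adding $B(c,3r_i)$ the ball $B_l^*$ becomes covered. Let $c\in\mathcal{F}$ be the facility closest to $p$, as in the algorithm. Since $c^*\in\mathcal{F}$ is a candidate, $d(p,c)\le d(p,c^*)\le r_i$. For an arbitrary $q\in B_l^*$, apply the triangle inequality twice:
\[
d(q,c)\le d(q,c^*)+d(c^*,p)+d(p,c)\le r_i+r_i+r_i=3r_i.
\]
Hence $q\in B(c,3r_i)$, and therefore $B_l^*\subseteq B(c,3r_i)\subseteq\bigcup_{B\in\sol}B$ at the end of the iteration, which is exactly the coverage claim.

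The proof is essentially a triangle-inequality exercise; there is no real obstacle, only a small bookkeeping point I would flag explicitly, namely that ``$B_l^*$ covered by $\sol$'' is interpreted as set-containment of $B_l^*$ in the union of $\sol$-balls, so that $p\in B_l^*\cap(\mathcal{C}\setminus\bigcup\sol)$ witnesses non-coverage before the iteration. With that convention fixed at the start, the three claims fall out in order.
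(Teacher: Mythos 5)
Your proof is correct and follows essentially the same route as the paper's: the existence and radius of $B_l^*$ come directly from the correct-choice hypothesis, non-coverage at the start of iteration $j$ follows because $p\in B_l^*\cap\mathcal{C}'$, and coverage afterward is established by the same three-term triangle-inequality chain through $c^*$ and $p$ (using $d(p,c)\le d(p,c^*)\le r_i$ because $c$ is the nearest facility and $c^*\in\mathcal{F}$).
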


\begin{proof}
Consider any iteration $j$ which is not the last one. Thus, $\mathcal{C}'$ is non-empty and we pick a point $p\in \mathcal{C}'$. As $p$ is not yet covered by $\sol$, there is a ball $B_l^*$ in OPT containing $p$ such that $B_l^*$ is not covered by $\sol$. We correctly find the radius $r_i$ of this ball by our assumption. Now, the center $c^*$ of $B_l^*$ must be at most at a distance $r_i$ from $p$, as $p\in B_l^*$. Thus, the closest point $c$ computed for $p$ must be at a distance $r_i$ from $p$. Now, for any point $p'\in B_l^*$, $\dist(c,p')\le \dist(c,p)+\dist(p,p')\le r_i+\dist(p,c^*)+\dist(c^*,p')\le r_i+r_i+r_i=3r_i$. (The first and the second inequalities are due to triangle inequality.) Thus, any point in $B_l^*$ is contained in $B(c,3r_i)$. As this ball is added to $\sol$ in the end of the iteration, $B_l^*$ is covered by $\sol$. 
\end{proof}

The above lemma shows that in every iteration of the algorithm except the last one, a new ball in OPT is being covered by a ball of radius 3 times the radius of the optimal ball. Thus, $\sol$ must cover all the points in $\mathcal{C}$ after at most $k$ iterations. We note that if $\mathcal{C}=\mathcal{F}$, then an algorithm similar to the above indeed gives a 2-approximation, as in that case, one can pick the ball $B(p,2r_i)$ in every iteration. The analysis is very similar. We obtain the following theorem. 

\begin{theorem}\label{thm:nukcgeneral}
A $3$-approximation for NUkC can be computed in time $2^{O(k\log k)}n^2$. Moreover, if $\mathcal{C}=\mathcal{F}$, the approximation factor is 2. 
\end{theorem}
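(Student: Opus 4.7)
The plan is to consolidate the algorithm description and the preceding lemma into a formal proof of Theorem \ref{thm:nukcgeneral}. The proof has three components: correctness of the $3$-approximation in the general metric case, a running-time analysis, and the refined $2$-approximation when $\mathcal{C}=\mathcal{F}$.

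First I would establish correctness. On the branch of the enumeration where the algorithm guesses the radius index $i$ correctly at each iteration, the already-proved lemma implies that every non-terminal iteration matches $\sol$ with a previously uncovered ball $B_l^*$ of OPT whose radius is exactly the guessed $r_i$, and that the added ball of radius $3r_i$ covers all of $B_l^*$. Since OPT contains exactly $k$ balls, after at most $k$ iterations every ball of OPT has been matched in this way, so $\sol$ covers all of $\mathcal{C}$ and the algorithm terminates. The produced solution uses exactly $k_i$ balls of radius $r_i$ (as the input demands) precisely because the correct branch mirrors the radius multiset of OPT; any other branch that either fails to cover $\mathcal{C}$ within $k$ iterations or violates the multiset constraint is simply discarded.

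Next I would bound the running time. The number of branches to enumerate is $t^k \le k^k = 2^{O(k\log k)}$, since at most one radius index is guessed per iteration and there are at most $k$ iterations. Within a single branch, a one-time $O(n^2)$ preprocessing stores, for each $p\in \mathcal{C}$, its nearest neighbor in $\mathcal{F}$; each iteration then requires $O(n)$ work to identify $c$ and $O(n)$ work to update $\mathcal{C}'$ by scanning the remaining points, giving $O(kn)$ additional cost per branch. Adding up yields a total running time of $2^{O(k\log k)}\cdot n^2$.

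For the $\mathcal{C}=\mathcal{F}$ refinement, I would rerun the same argument but add the ball $B(p,2r_i)$ in place of $B(c,3r_i)$; this is legal because $p$ itself now belongs to $\mathcal{F}$. The triangle-inequality hop through $c$ used in the lemma collapses: for any $p'\in B_l^*$, $\dist(p,p')\le \dist(p,c^*)+\dist(c^*,p')\le 2r_i$, which certifies $B(p,2r_i)\supseteq B_l^*$, and the rest of the inductive covering argument and the running-time count are unchanged. The main subtlety I anticipate is keeping the enumeration honest: one must guess a radius index only once per iteration (yielding $t^k$ branches), not once per point of $\mathcal{C}$ (which would blow up to $t^n$); the bound $|\sol|\le k$ is what keeps the guess budget at $2^{O(k\log k)}$.
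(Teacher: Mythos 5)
Your proof is correct and follows the paper's own argument essentially step for step: the same guess-a-radius-per-iteration greedy, the same invocation of the covering lemma to show each non-terminal iteration retires a fresh ball of OPT, the same $t^k\le k^k$ branch count with nearest-neighbor preprocessing, and the same replacement of $B(c,3r_i)$ by $B(p,2r_i)$ when $\mathcal{C}=\mathcal{F}$. The one thing you add that the paper leaves implicit is the explicit remark that the correct branch mirrors OPT's radius multiset and hence respects the $k_i$ budgets; that is a reasonable and slightly more careful reading, not a different proof.
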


\subsection{The Euclidean Case}
In this case, $\mathcal{C}$ is a subset of $n$ points of $\R^d$ and $\mathcal{F}=\R^d$. A ball $B=B(c,r)$ with $c\in \R^d$ is called the Minimum Enclosing Ball (MEB) of $S$ if $B$ covers $S$ and there is no ball with center in $\R^d$ and radius strictly less than $r$ that covers $S$. We denote the center and radius of the MEB of $S$ by $c_{B(S)}$ and $r_{B(S)}$, respectively. 

Our algorithm is basically an extension of the $k$-center algorithm due to Badoiu~et al.~
\cite{DBLP:conf/stoc/BadoiuHI02}. 
In $k$-center, the goal is to select $k$ balls of equal radius that cover all the input points. However, in our case, we need to find $k_i$ balls of radius $r_i$ for each $1\le i\le t$ to cover the input points. Our main contribution is to handle non-uniform radii. 

\paragraph{The Algorithm.} Consider any arbitrary ordering of the $k$ balls $B_1^*=B(c_1^*,r_1^*),\ldots,B_k^*=B(c_k^*,r_k^*)$ in OPT. For each index $1\le i\le k$, we maintain a set $S_i\subseteq \mathcal{C}\cap B_i^*$ of points which is initialized to $\emptyset$. To start with, we pick any arbitrary point $q\in \mathcal{C}$ and add it to the set $S_i$ such that $q\in B_i^*$. For the time being, assume that we know this index $i$ for $q$. Next, we apply the following procedure over $k\cdot \lceil 2/\epsilon\rceil$ iterations. In the beginning of every iteration, we check whether all the points in $\mathcal{C}$ are covered by the union of the balls $\cup_{i=1}^k B(c_{B(S_i)},(1+\epsilon)r_i^*)$. If yes, then we terminate. Otherwise, we proceed as follows. For each point $p\in \mathcal{C}$, let $\delta(p)=\min_{i=1}^k \lVert c_{B(S_i)}-p\rVert$. Pick a point $p'\in \mathcal{C}$ among the points not in $\cup_{i=1}^k B(c_{B(S_i)},(1+\epsilon)r_i^*)$ such that $\delta(p')$ is maximized, i.e., $p'$ is a point which is farthest away from the centers of the MEBs among the \say{uncovered} points. Again suppose we know the correct optimal ball $B_i^*$ that contains $p'$. Add $p'$ to $S_i$. 

The assumption that for a point we know its optimal ball can be removed in a trivial manner by trying all $k$ possible choices. As we make the assumption for $O(k/\epsilon)$ points, the total number of choices to consider is $k^{O(k/\epsilon)}$. In every iteration, computation of the MEBs and $p'$ can be done in $O(ndk)$ time. Thus, the algorithm runs in total $2^{O((k\log k)/\epsilon)}dn$ time. 

It is possible to save some computation in the above algorithm by removing all points from consideration which are already covered and do not appear in a set $S_i$. However, the asymptotic complexity remains the same.  

Next, we prove the correctness of the algorithm. In particular, we show that when the algorithm terminates, all points of $\mathcal{C}$ are covered by $\cup_{i=1}^k B(c_{B(S_i)},(1+\epsilon)r_i^*)$. We need the following lemma proved in \cite{DBLP:conf/soda/GoelIV01}. 

\begin{lemma}\label{lem:distancelb}
For any set of points $T\subset \R^d$ and any point $z$ at distance $K$ from $c_{B(T)}$, there is a point $z'\in T$ at distance at least $\sqrt{r_{B(T)}^2+K^2}$ from $z$.  
\end{lemma}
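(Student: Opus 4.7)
The plan is to prove the bound by exploiting the first-order optimality characterization of the center of the minimum enclosing ball, which says (informally) that $c_{B(T)}$ cannot be nudged in any direction without increasing the maximum distance to $T$; equivalently, the ``active'' points of $T$ on the boundary of $B(T)$ cannot all lie in any open half-space through $c_{B(T)}$. Concretely, I would first establish the following key claim: for every unit vector $u\in\R^d$, there exists a point $z'\in T$ with $\lVert z'-c_{B(T)}\rVert = r_{B(T)}$ such that $(z'-c_{B(T)})\cdot u \geq 0$. The argument is by contradiction: if every boundary point $z'\in T$ satisfied $(z'-c_{B(T)})\cdot u < 0$, then by compactness (finiteness of $T$ suffices for the algorithmic setting) there would exist $\eta>0$ such that shifting the center to $c_{B(T)}+\eta u$ strictly decreases the distance to every active point while keeping all non-active points inside a ball of radius $r_{B(T)}$, contradicting minimality of $r_{B(T)}$.

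Assuming this characterization, the lemma reduces to a short computation. Set $K=\lVert z-c_{B(T)}\rVert$; if $K=0$ the statement is immediate by taking any furthest point of $T$, so assume $K>0$. Apply the claim with $u=-(z-c_{B(T)})/K$ to obtain $z'\in T$ on the boundary of $B(T)$ with
\[
(z'-c_{B(T)})\cdot (z-c_{B(T)}) \;\leq\; 0.
\]
Then expand
\[
\lVert z'-z\rVert^2 \;=\; \lVert z'-c_{B(T)}\rVert^2 \;-\; 2\,(z'-c_{B(T)})\cdot (z-c_{B(T)}) \;+\; \lVert z-c_{B(T)}\rVert^2,
\]
and use $\lVert z'-c_{B(T)}\rVert = r_{B(T)}$ together with the non-positivity of the cross term to conclude $\lVert z'-z\rVert^2 \geq r_{B(T)}^2 + K^2$, which is exactly the stated bound.

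The main obstacle is the characterization step. One has to be careful that merely having one active point on the correct side of $c_{B(T)}$ is not what falls out of a naive optimality argument; rather, what falls out is that the set of active points is not contained in any open half-space through $c_{B(T)}$, which by a separation argument is equivalent to $c_{B(T)}$ lying in the convex hull of the active points. Either formulation suffices to produce the required $z'$ for the given direction $u$. Once this geometric fact is in hand, the rest is a one-line calculation, so most of the write-up effort should go into making the half-space/perturbation argument rigorous and recording the (standard) conclusion that justifies picking $z'$ in the opposite direction from $z$.
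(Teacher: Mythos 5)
The paper does not prove this lemma; it cites it from Goel, Indyk and Varadarajan \cite{DBLP:conf/soda/GoelIV01}. Your approach---establishing that $c_{B(T)}$ lies in the convex hull of the active points of $T$ on the boundary of $B(T)$, extracting an active point $z'$ with nonnegative inner product against a prescribed direction, and then expanding $\lVert z'-z\rVert^2$---is the standard route, and your final expansion and conclusion are correct.

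There is, however, a sign error in the perturbation step of your key claim. You assume every active point $z'$ satisfies $(z'-c_{B(T)})\cdot u < 0$ and then shift the center to $c_{B(T)}+\eta u$, claiming this decreases distances to the active points. It does the opposite: writing $v := z'-c_{B(T)}$, one has
\[
\lVert v - \eta u\rVert^2 \;=\; r_{B(T)}^2 \;-\; 2\eta\,(v\cdot u) \;+\; \eta^2,
\]
and since $v\cdot u<0$ the middle term is positive, so the distance \emph{increases}. The correct perturbation is $c_{B(T)}-\eta u$, which gives
\[
\lVert v + \eta u\rVert^2 \;=\; r_{B(T)}^2 \;+\; 2\eta\,(v\cdot u) \;+\; \eta^2 \;<\; r_{B(T)}^2
\]
for sufficiently small $\eta>0$, again because $v\cdot u<0$; for the same $\eta$ the non-active points remain strictly inside by continuity (using finiteness of $T$). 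With this sign corrected, the contradiction with minimality of $r_{B(T)}$ goes through and the rest of your argument is sound.
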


First, we prove the following lemma. 

\begin{lemma}\label{lem:boundoncoreset}
Consider any set $S_i$ where $1\le i\le k$ and suppose the algorithm added $\lceil 2/\epsilon\rceil$ points of $\mathcal{C}\cap B_i^*$ to $S_i$. Then for any point $p\in \mathcal{C}\cap B_i^*$, 
$\lVert c_{B(S_i)}-p\rVert\le (1+\epsilon)r_i^*$. 
\end{lemma}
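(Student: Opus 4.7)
The plan is a proof by contradiction using a Bădoiu--Clarkson-style core-set growth argument. Let $q_1,\dots,q_T$ (with $T:=\lceil 2/\epsilon\rceil$) be the points of $\mathcal C\cap B_i^*$ added to $S_i$ in order, and set $c_j:=c_{B(\{q_1,\dots,q_j\})}$ and $\rho_j:=r_{B(\{q_1,\dots,q_j\})}$. I would assume, toward a contradiction, that some $p\in\mathcal C\cap B_i^*$ satisfies $\|c_T-p\|>(1+\epsilon)\,r_i^*$, and aim to show $\rho_T>r_i^*$, which contradicts the trivial bound $\rho_T\le r_i^*$ coming from $S_i\subseteq B_i^*$ (the ball $B_i^*$ itself covers $S_i$, so its MEB radius is no larger).

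The first step is to read off a uniform lower bound on $\|q_{j+1}-c_j\|$ from the algorithm's selection rule. Each $q_{j+1}$ was chosen as a globally farthest uncovered point, so it is not contained in $B(c_j,(1+\epsilon)\,r_i^*)$; hence $K_j:=\|q_{j+1}-c_j\|>(1+\epsilon)\,r_i^*$. Combined with $q_{j+1}\in B_i^*$, the triangle inequality also gives $\|c_j-c^*\|>\epsilon\,r_i^*$, where $c^*$ denotes the center of $B_i^*$. Next, applying Lemma~\ref{lem:distancelb} to the set $\{q_1,\dots,q_j\}$ and the point $q_{j+1}$ furnishes a witness $z_j\in\{q_1,\dots,q_j\}$ with $\|q_{j+1}-z_j\|\ge\sqrt{\rho_j^2+K_j^2}$; since both endpoints lie in $B_i^*$, one also deduces the a priori bound $\rho_j^2+K_j^2\le 4(r_i^*)^2$.

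The main obstacle, and the heart of the argument, is upgrading these per-iteration facts into a linear growth of $\rho_j$. A naive use of Lemma~\ref{lem:distancelb} only yields $\rho_{j+1}\ge\tfrac12\sqrt{\rho_j^2+K_j^2}$, which compounds to an $O(1/\epsilon^2)$-type rate and is too weak for the target $T=\lceil 2/\epsilon\rceil$. To close this gap I would follow the tightened Bădoiu--Clarkson analysis, tracking the potential $d_j:=\|c_j-c^*\|$ and exploiting $d_j>\epsilon\,r_i^*$ to show that the displacement $c_{j+1}-c_j$ carries a definite component toward $c^*$, yielding a per-step increment of the form $\rho_{j+1}-\rho_j\ge \tfrac{\epsilon}{2}\,r_i^*$ up to lower-order corrections. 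Telescoping over the $T$ additions then gives $\rho_T>r_i^*$, producing the desired contradiction.
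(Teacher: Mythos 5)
Your high-level plan is right (Bădoiu--Clarkson-style core-set growth, proof by contradiction, bound $\rho_T$ against the trivial cap $\rho_j\le r_i^*$), but there is a genuine gap in the central step, and your proposed repair would not go through.

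The crucial mistake is in how you invoke Lemma~\ref{lem:distancelb}. You apply it with $T=\{q_1,\dots,q_j\}$ and $z=q_{j+1}$ (the newly added \emph{point}), which produces a witness $z_j\in T$ with $\|q_{j+1}-z_j\|\ge\sqrt{\rho_j^2+\|q_{j+1}-c_j\|^2}$, and then you can only conclude $\rho_{j+1}\ge\tfrac12\|q_{j+1}-z_j\|$ --- the factor $1/2$ is exactly what kills the rate, as you observe. The paper instead applies the lemma with $T=S_{i,j}$ and $z=c_{j+1}$, the \emph{new MEB center}. Setting $K_j:=\|c_{j+1}-c_j\|$, this immediately gives a point of $S_{i,j}\subseteq S_{i,j+1}$ at distance $\ge\sqrt{\rho_j^2+K_j^2}$ from $c_{j+1}$, hence
\[
\rho_{j+1}\ \ge\ \sqrt{\rho_j^2+K_j^2}
\]
with \emph{no} factor-of-two loss. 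Pairing this with the triangle inequality $\|c_j-q_{j+1}\|\le K_j+\rho_{j+1}$ and the selection property $\|c_j-q_{j+1}\|>(1+\epsilon)r_i^*=:\hat R$ gives $\rho_{j+1}>\hat R-K_j$, and minimizing $\max\{\hat R-K_j,\sqrt{\rho_j^2+K_j^2}\}$ over $K_j$ yields the quadratic recurrence $\lambda_{j+1}\ge(1+\lambda_j^2)/2$ for $\lambda_j:=\rho_j/\hat R$, which solves to $\lambda_j\ge 1-\tfrac{1}{1+j/2}$ and gives the $\lceil 2/\epsilon\rceil$ bound.

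Your proposed fix --- tracking the potential $d_j:=\|c_j-c^*\|$ and arguing a per-step gain $\rho_{j+1}-\rho_j\ge\tfrac{\epsilon}{2}r_i^*$ ``up to lower-order corrections'' --- does not reflect the actual Bădoiu--Clarkson analysis and the claimed increment is false. The recurrence $\lambda_{j+1}\ge(1+\lambda_j^2)/2$ has $\lambda_{j+1}-\lambda_j\ge(1-\lambda_j)^2/2$, and since $1-\lambda_j$ decays like $\tfrac{1}{1+j/2}$, the per-step gain decays like $\Theta(1/j^2)$; there is no uniform $\Theta(\epsilon)$ increment. The argument works because the telescoped sum $\sum 1/j^2$-type increments over $\Theta(1/\epsilon)$ steps pushes $\lambda_j$ past $1/(1+\epsilon)$, not because each step contributes $\epsilon/2$. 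The paper never introduces $d_j$ and needs no directional decomposition; the whole proof is the two inequalities above plus solving the recurrence.
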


\begin{proof}
Consider the first 
$\tau=\lceil 2/\epsilon\rceil$ iterations when the algorithm adds points of $\mathcal{C}\cap B_i^*$ to the set $S_i$. Let $S_{i,j}$ be the set $S_i$ at the $j$-th such iteration.  Thus, $S_{i,0}$ is a singleton set, and $S_{i,j+1}=S_{i,j}\cup \{p'\}$ for some $p'\in \mathcal{C}\cap B_i^*$ added to $S_{i,j}$.  
Also let $\hat{r}_j=r_{B(S_{i,j})}$, $\hat{R}=(1+\epsilon)r_i^*$, $\lambda_{j}=\hat{r}_j/\hat{R}$, and $K_j=\lVert c_{B(S_{i,j+1})}-c_{B(S_{i,j})}\rVert$. We have the following claim whose proof is similar to the proof of a theorem (Theorem 2.1) due to Badoiu and Clarkson \cite{DBLP:conf/soda/BadoiuC03}. 

\begin{claim}
$\lambda_{j}\ge 1-\frac{1}{1+j/2}$. 
\end{claim}

\begin{proof}
Note that 
the point $p'$ that the algorithm adds to $S_{i,j}$ has the property that it is currently not in $\cup_{l=1}^k \{B(c_{B(S_l)},(1+\epsilon)r_l^*)\}$ and, in particular not in $B(c_{B(S_{i,j})},\hat{R})$. Moreover, $p'$ is farthest away from $c_{B(S_{i,j})}$ among the points in $\mathcal{C}\cap B_i^*$ not covered by $\cup_{l=1}^k \{B(c_{B(S_l)},(1+\epsilon)r_l^*)\}$. 
Thus, $\lVert c_{B(S_{i,j})}-p'\rVert >  \hat{R}$. Also, $\lVert c_{B(S_{i,j})}-p'\rVert\le \lVert c_{B(S_{i,j})}-c_{B(S_{i,j+1})}\rVert + \lVert c_{B(S_{i,j+1})}-p'\rVert\le K_j+\hat{r}_{j+1}$. Thus, $\hat{r}_{j+1} > \hat{R}-K_j$. 

By Lemma \ref{lem:distancelb}, with $S_{i,j}=T$ and $c_{B(S_{i,j+1})}=z$, there is a point of $S_{i,j}$ at least $\sqrt{\hat{r}_j^2+K_j^2}$ from $c_{B(S_{i,j+1})}$. Thus, $\hat{r}_{j+1} \ge \sqrt{\hat{r}_j^2+K_j^2}$. It follows that, \[\lambda_{j+1}\hat{R}=\hat{r}_{j+1}\ge \max\{\hat{R}-K_j,\sqrt{\lambda_{j}^2{\hat{R}}^2+K_j^2}\}\]
The lower bound on $\lambda_{j+1}$ is smallest when the above two quantities are equal, i.e., when $K_j=\frac{(1-\lambda_j^2)\hat{R}}{2}$. Thus, $\lambda_{j+1}\ge (1+\lambda_{j}^2)/2$. 

This recurrence solves to $\lambda_{j}\ge 1-\frac{1}{1+j/2}$ using the fact that $\lambda_0=0$. 
\end{proof}

Now, we know that $S_{i,j} \subseteq \mathcal{C}\cap B_i^*$ and so the radius of the MEB of $S_{i,j}$ can be at most $r_i^*$. Hence, $\lambda_{j}$ can be at most $1/(1+\epsilon)$ by definition. By the above claim, this maximum value is achieved when $j\ge 2/\epsilon$, and thus $j< 1+\lceil 2/\epsilon\rceil$. It follows that after $\lceil 2/\epsilon\rceil$ points of $\mathcal{C}\cap B_i^*$ are added to $S_i$, $(1+\epsilon)$-expansion of $B(S_i)$ contains all the points of $\mathcal{C}\cap B_i^*$. As the radius of $B(S_i)$ is at most $r_i^*$, for any point $p\in B_i^*$, $\lVert c_{B(S_i)}-p\rVert\le (1+\epsilon)\cdot r_{B(S_{i})}\le (1+\epsilon)r_i^*$. 
\end{proof}

Next, we finish the correctness proof. 

\begin{lemma}
When the algorithm terminates, all points of $\mathcal{C}$ are covered by $\cup_{i=1}^k B(c_{B(S_i)},(1+\epsilon)r_i^*)$.
\end{lemma}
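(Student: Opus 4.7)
Assuming the algorithm makes all correct guesses (so that each point $p'$ added to $S_i$ actually lies in $B_i^*$), the plan is to combine Lemma~\ref{lem:boundoncoreset} with a pigeonhole-style counting argument on the total sizes of the sets $S_i$. First, I would establish a \emph{capacity} claim: no $S_i$ ever receives more than $\lceil 2/\epsilon\rceil$ points during the execution. Indeed, the moment $S_i$ has received $\lceil 2/\epsilon\rceil$ points, Lemma~\ref{lem:boundoncoreset} guarantees that every $p\in\mathcal{C}\cap B_i^*$ satisfies $\lVert c_{B(S_i)}-p\rVert\le (1+\epsilon)\,r_i^*$ and is therefore contained in $B(c_{B(S_i)},(1+\epsilon)\,r_i^*)\subseteq \cup_{l=1}^k B(c_{B(S_l)},(1+\epsilon)\,r_l^*)$. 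Since the algorithm only ever selects an uncovered point $p'$ and only adds it to the set $S_i$ with $p'\in B_i^*$, no further element of $\mathcal{C}\cap B_i^*$ can be selected, so $S_i$ cannot grow beyond this threshold.

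Next, I would count the total number of additions performed. The initial step contributes one element to some $S_{i_0}$, and each main-loop iteration that does not trigger the termination test adds exactly one element to some $S_i$. By the capacity claim, $\sum_{i=1}^k |S_i| \le k\cdot\lceil 2/\epsilon\rceil$ at all times. Thus, the main loop cannot perform all $k\cdot\lceil 2/\epsilon\rceil$ of its iterations without terminating early, because together with the initial point this would force $\sum_{i=1}^k |S_i|=k\cdot\lceil 2/\epsilon\rceil+1$, violating the capacity bound. Consequently, within the prescribed $k\cdot\lceil 2/\epsilon\rceil$ iterations the algorithm must reach an iteration whose coverage check succeeds, and it terminates there.

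Finally, the lemma follows immediately from the definition of that coverage check: when it passes, every point of $\mathcal{C}$ lies in $\cup_{i=1}^k B(c_{B(S_i)},(1+\epsilon)\,r_i^*)$, which is exactly the conclusion we want. The one delicate step—and the point I expect to verify most carefully—is the self-consistency underlying the capacity claim: whenever an uncovered $p'\in\mathcal{C}\cap B_i^*$ exists, one must have $|S_i|<\lceil 2/\epsilon\rceil$, for otherwise Lemma~\ref{lem:boundoncoreset} would already certify that $p'$ is covered, contradicting its selection. This consistency is what lets the counting argument go through without any additional case analysis on how the $|S_i|$'s are distributed across the indices $i$.
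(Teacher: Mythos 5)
Your proof is correct and follows essentially the same approach as the paper: a capacity bound of $\lceil 2/\epsilon\rceil$ per $S_i$ derived from Lemma~\ref{lem:boundoncoreset}, combined with a counting/pigeonhole argument to conclude the coverage check must eventually pass. (Your bookkeeping is in fact slightly more careful: by counting the initial point separately you get a total of $k\cdot\lceil 2/\epsilon\rceil+1$ if all iterations ran, which already contradicts the capacity bound outright, whereas the paper pins down a specific saturated $S_j$ and then applies Lemma~\ref{lem:boundoncoreset} to the uncovered point — same ingredients, same conclusion.)
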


\begin{proof}
Suppose the statement is not true, i.e., there is a point $p$ which is not covered by $\cup_{i=1}^k B(c_{B(S_i)},(1+\epsilon)r_i^*)$. This means the algorithm ran for all $k\cdot\lceil 2/\epsilon\rceil$ iterations, i.e., the total size of the sets $\{S_i\}$ is $k\cdot\lceil 2/\epsilon\rceil$. Let $j$ be the index such that $p\in B_j^*$. Now, consider the case when the algorithm makes all correct choices. Note that there must be at least one such case, as we try all possible $k$ choices in every step. By Lemma \ref{lem:boundoncoreset} and the way the algorithm adds points to the sets $\{S_i\}$, the size of each $S_i$ can be at most $\lceil 2/\epsilon\rceil$. But, this implies that the size of $S_j$ is at least $\lceil 2/\epsilon\rceil$, as the total size of the sets $\{S_i\}$ is $k\cdot\lceil 2/\epsilon\rceil$. It follows by Lemma \ref{lem:boundoncoreset}, $\lVert c_{B(S_j)}-p\rVert\le (1+\epsilon)r_j^*$, which is a contradiction. 
\end{proof}

From the above discussion, we obtain the following. 

\begin{theorem}\label{thm:nukceuclid}
A $(1+\epsilon)$-approximation for Euclidean NUkC can be computed in time $2^{O((k\log k)/\epsilon)}dn$. 
\end{theorem}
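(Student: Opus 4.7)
The plan is to assemble the pieces that have already been developed in this subsection: the explicit algorithm, the running-time bookkeeping for each branch of guesses, and the two correctness lemmas (Lemma \ref{lem:boundoncoreset} and the lemma that follows it). There is no further new machinery required; the theorem is essentially a summary, so my job is only to check that the running time matches the claim and that the output of the algorithm is a feasible NUkC solution with dilation at most $1+\epsilon$.

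For the running time, I would argue branch-by-branch. On any single branch of guesses, the outer loop runs for at most $k\lceil 2/\epsilon\rceil$ iterations. Each iteration computes $c_{B(S_i)}$ for every $i$ (each $S_i$ has size $O(1/\epsilon)$), checks whether $\cup_i B(c_{B(S_i)},(1+\epsilon)r_i^*)$ covers $\mathcal C$ by scanning the $n$ input points, and picks the farthest uncovered one. This is $O(ndk)$ time, so each branch costs $O(dn \cdot k^2/\epsilon)$. Branching comes from guessing, at every one of the $O(k/\epsilon)$ points added to some $S_i$, the index of the optimal ball containing it; this is at most $k$ choices per guess and thus $k^{O(k/\epsilon)} = 2^{O((k\log k)/\epsilon)}$ branches in total. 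Multiplying and absorbing polynomial-in-$k$ factors into the exponential gives the claimed $2^{O((k\log k)/\epsilon)}\cdot dn$ bound.

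For correctness, I would focus on the branch that at every step correctly guesses the index $i$ with $p'\in B_i^*$; such a branch exists because we exhaustively enumerate. On that branch, the algorithm terminates in one of two ways: either the covering test succeeds, in which case $\cup_i B(c_{B(S_i)},(1+\epsilon)r_i^*)$ already covers $\mathcal C$, or the iteration budget $k\lceil 2/\epsilon\rceil$ is exhausted. In the second case I would invoke the final lemma, which uses Lemma \ref{lem:boundoncoreset} to show that every uncovered point $p\in B_j^*$ would force $|S_j|>\lceil 2/\epsilon\rceil$, contradicting the pigeonhole bound on the total number of additions. Either way, the output balls have centers in $\mathbb R^d=\mathcal F$, radii $(1+\epsilon)r_i^*$ matching the multiset $\{r_1^{k_1},\ldots,r_t^{k_t}\}$ after $(1+\epsilon)$-scaling, and cover $\mathcal C$, so the dilation is $\leq 1+\epsilon$.

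The main obstacle, which has already been overcome earlier, is Lemma \ref{lem:boundoncoreset}: showing that $\lceil 2/\epsilon\rceil$ greedily chosen points per optimal ball suffice. The adaptation from the Badoiu--Clarkson $k$-center coreset argument to the non-uniform setting is what drives the algorithm, and the key subtlety there is that the ``farthest point'' rule is defined globally (over all $k$ MEBs and their associated radii $r_i^*$), not per ball, yet the per-ball potential $\lambda_j=\hat r_j/\hat R$ still grows along the $B_i^*$-restricted subsequence according to $\lambda_{j+1}\geq (1+\lambda_j^2)/2$. Given that lemma, the proof of Theorem \ref{thm:nukceuclid} reduces to the bookkeeping above.
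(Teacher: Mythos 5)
Your proposal is correct and follows essentially the same route as the paper: the theorem is the bookkeeping summary of the algorithm presented just above it, with correctness resting on the branch that guesses all ball-indices correctly, the pigeonhole argument over the $k\lceil 2/\epsilon\rceil$ additions, and Lemma~\ref{lem:boundoncoreset} as the workhorse. Your per-branch accounting ($O(ndk)$ per iteration times $O(k/\epsilon)$ iterations, times $k^{O(k/\epsilon)}$ branches) and your observation that the output's radius multiset matches $\{r_1^{k_1},\dots,r_t^{k_t}\}$ up to the $(1+\epsilon)$ dilation both match the paper's reasoning.
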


\bibliographystyle{alpha}
\bibliography{references}

\end{document}